\newcommand{\initial}[1]{%
\lettrine[lines=3,lhang=0.3,nindent=0em]{
\color[HTML]{31ADF3}
{\textsf{#1}}}{}}
\newcommand{\N}{\mathbb N}
\newcommand{\R}{\mathbb R}
\newcommand{\tTr}{\text{Tr}\,} 
\newcommand{\tDim}{\text{dim}\,} 
\newcommand{\tL}{\text{L}\,} 
\newcommand{\tT}{\text{T}\,} 
\newcommand{\tD}{\text{D}\,} 
\newcommand{\cH}{\mathcal H}
\newcommand{\cC}{\mathcal C}
\newcommand{\cD}{\mathcal D}
\newcommand{\cB}{\mathcal B}
\newcommand{\cA}{\mathcal A}
\newcommand{\cX}{\mathcal X}
\newcommand{\cY}{\mathcal Y}
\newcommand{\cZ}{\mathcal Z}
\newcommand{\cF}{\mathcal F}
\newcommand{\cT}{\mathcal T}
\newcommand{\cE}{\mathcal E}
\newcommand{\cN}{\mathcal N}
\newcommand{\ch}{\text{H}} 
\newcommand{\lis}{\displaystyle (\lambda_{i})_{i=0}^{\infty}} 
\newcommand{\li}{ \lambda_i} 
\newcommand{\lin}{\lambda_{i}^{n}} 
\newcommand{\lins}{ (\lambda_{i}^{n})_{i=0}^{\infty}} 
\newcommand{\sumi}{\displaystyle \sum_{i=0}^{\infty}} 
\newcommand{\psis}{\{\psi_i\}_{i=0}^{\infty}} 
\newcommand{\psii}{\psi_i} 
\newcommand{\psin}{\psi_{i}^n} 
\newcommand{\psins}{\{\psi_{i}^{n}\}_{i=0}^\infty}
\newcommand{\sumii}{\displaystyle \sum_{i=0}^{I}} 
\newcommand{\tail}{\sum_{i=I+1}^{\infty}} 
\newcommand{\outeri}{\ket{\psi_i}\bra{\psi_i}} 
\newcommand{\outern}{\ket{\psi_{i}^n}\bra{\psi_{i}^n}} 
\newcommand{\geo}{\text{E}}
\newcommand{\ent}{\text{S}}
\newtheorem{Thm}{Theorem}[section]
\newtheorem{Cor}[Thm]{Corollary}
\newtheorem{Lem}[Thm]{Lemma}
\newtheorem{Prop}[Thm]{Proposition}
\newtheorem{Rmk}[Thm]{Remark}
\newcommand{\HorRule}{\color[HTML]{31ADF3}
\rule{\linewidth}{1pt}%
}
\title{Some Remarks on the Entanglement Number}					
\author{George Androulakis$^{~\mathsf{1}}$ \& Ryan McGaha$^{~\mathsf{2}}$\\[8pt]}											
\scriptsize\usefont{OT1}{phv}{m}{n} \color[HTML]{31ADF3}{\textbf{Editors: \emph{Stan Gudder} \& \emph{Danko Georgiev}} }\\[5pt]
\par\end{flushleft}\HorRule}
\date{}																				
\begin{document}
\maketitle
\thispagestyle{fancy} 			
\initial{G}\textbf{udder, in a recent paper, defined a candidate entanglement measure which is called the entanglement number. The entanglement number is first defined on pure states and then it extends to mixed states by the convex roof construction. In Gudder's article it was left as an open problem to show that Optimal Pure State Ensembles (OPSE) exist for the convex roof extension of the entanglement number from pure to mixed states. We answer Gudder's question in the affirmative, and therefore we obtain that the entanglement number vanishes only on the separable states. More generally we show that OPSE exist for the convex roof extension of any function that is norm continuous on the pure states of a finite dimensional Hilbert space. Further we prove that the entanglement number is an LOCC monotone, (and thus an entanglement measure), by using a criterion that was developed by Vidal in 2000. We present a simplified proof of Vidal's result where moreover we use an interesting point of view of tree representations for LOCC communications. Lastly, we generalize Gudder's entanglement number by producing a monotonic family of entanglement measures which converge in a natural way to the entropy of entanglement.\\ Quanta 2020; 9: 22--36.}

\begin{figure}[b!]
\rule{245 pt}{0.5 pt}\\[3pt]
\raisebox{-0.2\height}{\includegraphics[width=5mm]{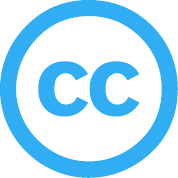}}\raisebox{-0.2\height}{\includegraphics[width=5mm]{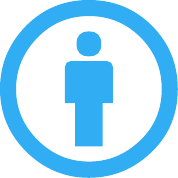}}
\footnotesize{This is an open access article distributed under the terms of the Creative Commons Attribution License \href{http://creativecommons.org/licenses/by/3.0/}{CC-BY-3.0}, which permits unrestricted use, distribution, and reproduction in any medium, provided the original author and source are credited.}
\end{figure}

\section{Introduction}

Entanglement is a fundamental concept in quantum mechanics. A detailed review article 
\cite{HHHH} summarized our understanding of entanglement until 2009, but many new results have since appeared. Gudder in \cite{G19} 
introduced a general theory of entanglement which applies even to classical probability measures with discrete 
support. In the same article, he also introduced a quantity that he called the \emph{entanglement number} which 
quantifies entanglement for classical probability measures as well as for density operators. Indeed the formula 
$e(\ket{\psi})=\sqrt{1-\sum_i \lambda_i^2} $ (where $(\sqrt{\lambda_i})_i$ are the Schmidt coefficients of the 
pure state $\psi$, with $\lambda_i \geq 0$ and $\sum_i \lambda_i=1$) defines the entanglement number 
$e(\ket{\psi})$ of a pure bipartite state $\ket{\psi}$ and can also be naturally extended to classical probability 
measures. The advantage of the entanglement number, besides the fact that a similar formula makes sense even in classical probability theory,
is that it can be computed easily for pure bipartite states. Indeed, in \cite[Theorem 4.2]{G20} a closed form 
formula is given for computing the entanglement number of a pure bipartite state $\ket{\psi}$. For example, if 
$\cX$, $\cY$ are two finite dimensional Hilbert spaces with orthonormal bases $(\ket{x_i})$ and $(\ket{y_j})$ 
respectively, and
$\ket{\psi} \in \cX \otimes \cY$ is a pure state written as $\ket{\psi} = \sum_{i,j} c_{i,j} \ket{x_i} \ket{y_j}$, 
then the entanglement number $e(\ket{\psi})$ can be computed by 
\begin{equation}\label{E:entanglement_number_for_pure_states_I}
e(\ket{\psi})= \sqrt{ 1- \tTr (|C|^4)}
\end{equation}
where $C$ is the matrix $(c_{i,j})$. Moreover, it is shown that if $(c_j)$'s are the columns of $C$ then 
\begin{equation}\label{E:entanglement_number_for_pure_states_II}
\tTr (|C|^4)= \sum_{r,s} |\langle c_r, c_s \rangle |^2.
\end{equation}

The extension of the entanglement number from pure states to mixed states is done via the \emph{convex roof} construction:

\begin{widetext}
\begin{equation}\label{E:entanglement_number_for_mixed_states}
 e(\rho) = \inf \left\{ \sumi \li e(\psi_i) : \li \geq 0, \sum_i \li =1, \psi_i\text{ are pure states, and }\sumi \li\outeri = \rho \right\}
\end{equation}
\end{widetext}

\noindent Gudder states in \cite{G19} the open question of whether the infimum in the above convex roof construction is always attained. Indeed some of his results, (such as \mbox{\cite[Theorem 3.3]{G19}}), depend on that assumption.
Here, we answer this question in the affirmative. Hence some of his results are strengthened.
Moreover, our Theorem~\ref{Thm:Convex_Roof} is general enough that it applies to all convex roof extensions of
norm continuous functions on pure states of finite dimensional Hilbert spaces; e.g., our Theorem~\ref{Thm:Convex_Roof} applies to the 
\emph{entanglement of formation} which is an entanglement measure, (see \cite{PV05,VPRK}), whose definition is based on the convex roof 
construction and introduced in \cite{BBPSSW96, BDSW96}. 

Another situation where our Theorem~\ref{Thm:Convex_Roof} can be applied is 
in the definition of the \emph{mutual entropy of a channel with respect to a state} which is defined in \cite{CNT87} via 
a variational procedure using the relative entropy between a state and all its pure state decompositions. 
It has been shown \cite{Uhl} that the infimum in the extension of the entanglement
entropy is always attained. Our result, Theorem~\ref{Thm:Convex_Roof}, yields this as a special case. 

Our result also applies to the convex roof extended \emph{negativity} \cite{LCDJ}, the family of
\emph{concurrences} \cite{GG}, the \emph{geometric measure of entanglement} \cite{BL}, and lastly 
the \emph{entanglement number} \cite{G19}, thus verifying the existence of OPSE for many common entanglement measures.

We also prove that the entanglement number is an entanglement
measure. Recall that if $\cX$, $\cY$ are two finite dimensional Hilbert spaces and $\tD (\cH )$ denotes the set of 
all density operators on a Hilbert space $\cH$, then a function $\mu:\tD(\cH)\rightarrow\R_{\geq0} $ is an \emph{entanglement measure} \cite{PV05, V98} if the following three properties hold for all $\rho\in D(\cH)$:
\begin{enumerate}
 \item $\mu(\rho)=0$ if and only if $\rho$ is separable, (i.e.\ $\mu$ is \emph{faithful}).
 \item $\mu(\Lambda(\rho))\leq \mu(\rho)$ where $\Lambda$ is an LOCC channel, (\i.e.~$\mu$~is an \emph{LOCC monotone}). LOCC stands for Local Operations and Classical Communication. See Section~\ref{Sec:LOCCmonotones} for a precise description of these channels. 
 \item $\mu((U_1 \otimes U_2) \rho (U_1^*\otimes U_2^*))=\mu(\rho)$ where $U_1$ and $U_2$ are unitaries, (i.e. $\mu$ is \emph{invariant under local unitaries}).
\end{enumerate}

A function $\mu$ on the states of a Hilbert space is called an \emph{LOCC monotone} \cite{V98} if property $2.$ is
satisfied. LOCC channels are hard to represent, \cite{LOCC}. In Section~\ref{Sec:LOCCmonotones} we provide 
a tree representation of LOCC channels which 
contributes to a better understanding of these channels. Our result, Theorem~\ref{Thm:LOCC_monotone} which is 
a slight strengthening of a result of 
Vidal~\cite{V98} with a simplified proof, shows that the entanglement number is an LOCC monotone.
Therefore we obtain that the entanglement number is indeed an entanglement measure. Furthermore we provide a family
of entanglement measures which are defined in a similar manner as the entanglement measure. These entanglement
measures are parametrized with $p \in (1,\infty)$ and they are equal to the entanglement number when $p=2$.

We would like to thank the editors for pointing out that Gudder had answered
the question that he posted in \cite{G19} in the publication \cite{G20b}.
We also would like to thank Shirokov who made us aware of his paper \cite{Shirokov10} where an extension 
of our Theorem~\ref{Thm:Convex_Roof} has been proved that is applicable even for infinite dimensional Hilbert
spaces. We were not aware of publications \cite{G20b} and \cite{Shirokov10}, and the 
 method used here in the proof of Theorem~\ref{Thm:Convex_Roof} is different that the one used in these publications. We decided to keep our proof of Theorem~\ref{Thm:Convex_Roof} for the completeness of the article.

\section{Existence of Optimal Pure State Ensembles}

In this section we prove that the infimum in the definition of convex roof construction which extends Gudder's 
entanglement number from pure states to mixed states, is always achieved. We present the argument in a more general
way so it can be applied to other convex roof constructions, (see Section~\ref{Sec:applications} for some applications). 

Assume that $\cX$ is a Hilbert space and $\mu$ is a function defined on the pure states of 
$\cH$ taking values in $\R_{\geq 0}$. One can extend $\mu$ to the set of density operators $\tD(\cX)$ of a 
Hilbert space $\cX$ by the convex roof construction:
\begin{equation} \label{E:convexroof}
 \mu(\rho) = \inf 
 \left\{ \sumi \li \mu(\psi_i) : \{ \lis, \{\psi_i\}_{i=0}^\infty \} \in \cC\cD (\rho)
 \right\}
\end{equation}
where the set $\cC\cD (\rho)$ of \emph{convex decompositions} of $\rho$ is defined as follows:

\begin{widetext}
\begin{equation}
\cC\cD (\rho) = \left\{ \{ \lis, \{\psi_i\}_{i=0}^\infty \} : \li \geq 0, \sum_{i=1}^\infty \li =1, \psi_i \in \cH, \text{ and } \sumi \li\outeri = \rho \right\}
\end{equation}
\end{widetext}

The main result of this section is the following:

\begin{Thm} \label{Thm:Convex_Roof}
If $\mu$ is a norm-continuous function on the pure states of a finite dimensional Hilbert space $\cH$, 
then the infimum in Equation~(\ref{E:convexroof}) is attained for all $\rho \in \tD(\cH)$. 
\end{Thm}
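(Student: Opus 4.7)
The plan is to reduce to finitely supported decompositions by a Carath\'eodory-type argument applied in an augmented space that tracks the cost, and then to invoke a standard compactness argument to attain the minimum. Let $d = \tDim \cH$, so the real vector space $\tHer(\cH)$ has dimension $d^2$.

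First I would introduce the ``graph'' of $\mu$ inside $\tHer(\cH)\oplus\R$,
\[
S = \{(\ket{\psi}\bra{\psi},\,\mu(\psi)) : \psi\in\cH,\ \|\psi\|=1\}.
\]
Since the unit sphere of $\cH$ is compact and $\mu$ is norm-continuous, $S$ is compact, hence its convex hull $\mathrm{conv}(S)\subset\tHer(\cH)\oplus\R$ is also compact (a standard fact in finite dimensions, proved via the continuous image of $\Delta_{d^2+1}\times S^{d^2+2}$). For any $\{\lis,\psis\}\in\cC\cD(\rho)$ with finite value $v:=\sumi \li\mu(\psi_i)$, the normalized partial sums $\frac{1}{\Lambda_n}\sum_{i=0}^n \li(\ket{\psi_i}\bra{\psi_i},\mu(\psi_i))$, with $\Lambda_n=\sum_{i=0}^n\li$, lie in $\mathrm{conv}(S)$ and converge to $(\rho,v)$; closedness of $\mathrm{conv}(S)$ forces $(\rho,v)\in\mathrm{conv}(S)$. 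Applying Carath\'eodory's theorem in the $(d^2+1)$-dimensional ambient space then expresses $(\rho,v)$ as a convex combination of at most $N:=d^2+2$ points of $S$. Hence the infimum in Equation~(\ref{E:convexroof}) coincides with the infimum of $\sum_{i=1}^{N}\lambda_i\mu(\psi_i)$ over all $N$-term pure-state decompositions of $\rho$ (padding with $\lambda_i=0$ is harmless).

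Second, I would attain this restricted infimum by compactness. The parameter space $\Delta_N\times\{\psi\in\cH:\|\psi\|=1\}^N$ is compact, the constraint $\sum_{i=1}^{N}\lambda_i\ket{\psi_i}\bra{\psi_i}=\rho$ carves out a closed, nonempty subset (containing the padded spectral decomposition of $\rho$), and the objective $\sum_{i=1}^{N}\lambda_i\mu(\psi_i)$ is continuous because $\mu$ is norm-continuous. A continuous function on a nonempty compact set attains its minimum, producing an optimal pure-state ensemble for $\rho$.

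The main obstacle is the first step: applying Carath\'eodory directly in $\tHer(\cH)$ would give only a short decomposition of $\rho$ as a density matrix, with no control on the cost $\sumi \li\mu(\psi_i)$. Tracking the cost as an additional real coordinate and using the compactness of $S$ (which requires both the norm-continuity of $\mu$ and the compactness of the unit sphere of $\cH$) is precisely what upgrades the Carath\'eodory reduction to preserve the value of the sum; finite-dimensionality of $\cH$ is used critically, both for Carath\'eodory and for the closedness of $\mathrm{conv}(S)$.
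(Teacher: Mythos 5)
Your proof is correct, and it takes a genuinely different route from the paper's. The paper topologizes the set of \emph{infinite} decompositions directly: it forms $\Pi=P_\infty\times S(\cH)^{\N}$ with the weak$^*$ topology on $\ell^1$ crossed with the product norm topology, proves $\cC\cD(\rho)$ is compact as the preimage of $\{\rho\}$ under a continuous map, proves the objective $\sum_i\lambda_i\mu(\psi_i)$ is continuous on $\Pi$ by explicit tail estimates, and concludes by compactness. You instead collapse the problem to \emph{finite} decompositions first: by lifting to the graph $S=\{(\ket{\psi}\bra{\psi},\mu(\psi))\}\subset\tHer(\cH)\oplus\R$, using compactness (hence closedness) of $\mathrm{conv}(S)$ to place $(\rho,v)$ inside it for every decomposition of value $v$, and invoking Carath\'eodory to replace any decomposition by one with at most $d^2+2$ terms \emph{of the same value} --- the key point being that the extra real coordinate carries the cost through the reduction. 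After that, attainment is elementary compactness on $\Delta_N\times S(\cH)^N$. Each step checks out: $S$ is compact because the unit sphere is compact and $\mu$ is continuous; the normalized partial sums do converge to $(\rho,v)$ because $\mu$ is bounded so the cost series converges absolutely; and the constraint set in the finite parametrization is closed, nonempty (padded spectral decomposition), and carries a continuous objective. What your approach buys is a quantitative bonus the paper does not give --- an optimal ensemble with at most $d^2+2$ pure states (and with the trace-one constraint one could shave this to $d^2+1$) --- and it avoids infinite-dimensional topology entirely; this is close in spirit to Uhlmann's treatment of roof constructions. What the paper's approach buys is that the compactness of the full decomposition space $\cC\cD(\rho)$ is established as an object in its own right, which is reusable for other variational problems over decompositions. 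Both arguments use finite-dimensionality of $\cH$ in an essential way, yours through Carath\'eodory and the compactness of the unit sphere, the paper's through norm-compactness of $S(\cH)$.
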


\begin{proof}
An important role in the proof of this theorem will be played by a compact metrizable topological space 
$(\Pi, \tau)$ that we define now.
Throughout the proof, for any Hilbert space $\cH$, we will denote by $S(\cH)$ the \emph{unit sphere} of $\cH$, 
(i.e. the vectors of $\cH$ of norm equal to $1$). As usual, the set $S(\cH)$ is identified with the set of pure
states of $\cH$.
Define 
\begin{equation}
 P_\infty =\{\lis : \li\geq0 \text{ for each i }, \sumi\li=1 \}
\end{equation}
i.e., $P_\infty$ is the intersection of $S(\ell^1)$ with positive cone of $\ell^1 $.
Next we define
\begin{equation}
 \Pi=P_\infty \times S(\cH)^{\N}
\end{equation}
 which intuitively can be thought of as the collection of all possible pairings of ``convex coefficients" with sequences of pure states. Equip $\Pi$ with the product topology $\tau$ of the weak$^*$ topology on $\ell_1$ with the infinite product topology of (any) norm topology of the unit sphere $S(\cH)$ of the Hilbert space $\cH$. 
 
Notice that $P_\infty$ is compact since it is a closed subspace of a weak$^*$ compact space. And as long as \mbox{$\tDim(\cH)<\infty$}, $S(\cH)$ is norm compact. This implies that space of all sequences of pure states, $S(\cH)^{\N}$ is norm compact by Tychonoff's theorem. Thus $\Pi$ is the product of two compact spaces and must therefore be compact. Note that we had to restrict ourselves to finite
dimensional Hilbert spaces since in general the set of pure states is not weak$^*$ compact, (see for example \cite[Theorem~2.8]{G60}).

Also notice that $(\Pi, \tau)$ is metrizable. Indeed the weak$^*$ topology on bounded subsets of $\ell_1$ is metrizable 
since the predual of $\ell_1$, (which is usually denoted by $c_0$), is separable. Also $S(\cH)^\N$ is the 
countable product of metric spaces, hence its product topology is metrizable. Thus $(\Pi, \tau)$ must be metrizable.

We split the rest of the proof into two Claims in order to make it more readable.

\noindent
\textbf{Claim 1:}
If $\cH$ is a finite dimensional Hilbert space, then for every $\rho \in \tD (\cH)$, the set $\cC\cD (\rho)$ is $\tau$-compact.

In order to prove Claim~1, assume that $\cH$ is a finite dimensional Hilbert space and define $f:\Pi\rightarrow\tD(\cH)$ by 
\begin{equation}
(\lis,\psis)\mapsto\sumi\li\outeri
\end{equation}
Since the $\lambda_i$'s are summable and the target space of $f$, (i.e. $\cD(\cH)$), is complete, this series 
will always converge.
Moreover notice that $f^{-1}(\{\rho\})$ is the set of all possible convex decompositions of $\rho$ into pure states, i.e. $f^{-1}(\{ \rho \})= \cC\cD(\rho)$. Since $f^{-1}(\{ \rho \})$ is a subset of the compact space
$\Pi$, 
in order to show that $f^{-1}(\{\rho\})$ is compact, it is enough to show that $f$ is continuous on $\Pi$. 
Since both spaces, $\Pi$ and $\cD (\cH)$ are metrizable, we will use sequences to check the continuity of $f$,
i.e. we will prove that $\pi_n\rightarrow\pi$ in $\Pi$ implies that $f(\pi_n)\rightarrow f(\pi)$ in $\tD(\cH)$.

So let $\pi_n =(\lins,\psins)$ be a sequence in $\Pi$ converging to $\pi=(\lis,\psis)$ as n goes to infinity, and let $\epsilon >0$. Then there exists some $I\in\N$ such that $\displaystyle \sum_{i=I+1}^\infty \li <\epsilon$, and so $\sumii \li \geq 1-\epsilon$. Moreover, since $\lins\rightarrow\lis$ in $\ell^1$, there exists some $N_0 \in \N$ such that $||(\lin)_{i=0}^I - (\li)_{i=0}^I ||_{\ell^1} < \epsilon$ for all $n\geq N_0$. And so $\sumii \lin\geq 1- 2\epsilon$ which yields that $\displaystyle\sum_{i=I+1}^\infty \lin <2\epsilon$. Thus we have the following inequalities:

\begin{widetext}
\begin{align}
 \nonumber ||f(\pi_n)-f(\pi)||_1 
 & \leq \left\| \sumii\lin\outern -\sumii\li\outeri\right\|_1 + 
 \left\|\tail \lin\outern - \tail\li\outeri\right\|_1\\
 \nonumber &\leq \left\| \sumii\lin\outern -\sumii\li\outeri \right\|_1 + \tail\lin \Bigg\|\outern \Bigg\|_1 + \tail\li \Bigg\|\outeri\Bigg\|_1 \\
 &=\left\| \sumii\lin\outern -\sumii\li\outeri\right\|_1 + \tail\lin + \tail\li\\
 \nonumber&< \left\|\sumii\lin\outern -\sumii\li\outeri\right\|_1 + 3\epsilon\\
 \nonumber&=\left\| \sumii (\lin -\li + \li)\outern -\sumii\li\outeri \right\|_1+3\epsilon\\
 \nonumber&\leq \sumii |\lin-\li| \cdot \Bigg\|\outern\Bigg\|_1 + \sumii\li \Bigg\|\outern-\outeri\Bigg\|_1 + 3\epsilon\\
 \nonumber&\leq\sumii \left(|\lin-\li| +\Bigg\|\outern-\outeri\Bigg\|_1\right) +3\epsilon
\end{align}
\end{widetext}

\noindent Now since weak-$^*$ convergence implies coordinate-wise convergence in $\ell^1$, we have that for all $n$ large enough, $|\lin-\li|<\frac{\epsilon}{I+1}$ for each $i$ $\in \{0, 1, \ldots , I\}$. We can also make sure that for all such $n$ and $i$, $\|\outern-\outeri\|_1 <\frac{\epsilon}{I+1}$, since $\psin$ converges to $\psii$. Thus for large enough $n$, we get that $\|f(\pi_n)-f(\pi)\|_1 <5\epsilon$, which implies that $f$ is continuous.
This finishes the proof of Claim~1.

Next we define $g:\Pi\rightarrow\R_{\geq0}$ by 
\begin{equation}
 (\lis,\psis)\mapsto\sumi\li \mu(\psi_i)
\end{equation}

Thus the statement that the infimum in Equation~(\ref{E:convexroof}) is always achieved, is equivalent to the fact that $g$ always attains its infimum on the set $\cC\cD (\rho)$, for any density matrix $\rho$. Since by Claim~1, $\cC\cD (\rho)$ is always $\tau$-compact, the proof of Theorem~\ref{Thm:Convex_Roof} will be complete once we prove the following:

\noindent
\textbf{Claim 2:} If $\mu$ is a norm-continuous function on the set of pure states of a finite dimensional
Hilbert space, then $g$ is continuous on $(\Pi , \tau)$.

The argument for the continuity of $g$ will be almost exactly the same as that for $f$, that is given in the 
Claim~1. Since $\mu$ is norm-continuous on the unit sphere of a finite dimensional Hilbert space which is 
norm-compact, we have
that $\mu$ is bounded, there exists some finite number $M$ such that $\mu(\psi)\leq M$ for all pure states $\psi$. 
Again, let $\pi_n =(\lins,\psins)$ be a sequence in $\Pi$ converging to $\pi=(\lis,\psis)$ as $n$ goes to infinity, and let $\epsilon >0$. Then using the same choice of $I$ as in the proof of the continuity of $f$, we have the following inequalities:

\begin{widetext}
\begin{align}
 \begin{split}
 |g(\pi)-g(\pi_n)|&\leq 
 \left|\sumii\li \mu(\psii)-\sumii\lin \mu(\psin)\right| +
 \left|\tail\li \mu(\psii)\right|+\left|\tail\lin \mu(\psin)\right|\\
 &\leq \left|\sumii\li \mu(\psii)-\sumii\lin \mu(\psin)\right| +\tail\li M+\tail\lin M \\
 &\leq\sumii |(\lin-\li+\li)\mu(\psin)-\li \mu (\psii)| + 3\epsilon M\\
 &\leq \sumii |\lin-\li|\mu(\psin) +\sumii\li |\mu(\psin)-\mu(\psii)| + 3\epsilon M\\
 &\leq \sumii |\lin-\li|M+\sumii|\mu(\psin)-\mu(\psii)| + 3\epsilon M\\
 \end{split}
\end{align}
\end{widetext}

Now taking $n$ large enough so that $|\lin-\li|<\frac{\epsilon}{I+1}$ and $|\mu (\psin)-\mu (\psii)|<\frac{\epsilon M}{I+1}$ for each $i$ $\in\{ 0,1, \ldots , I \} $, we get that $|g(\pi)-g(\pi_n)|<5\epsilon M$. Thus $g$ is also continuous of $\Pi$.
This finishes the proof of Claim~2 and thus the proof of Theorem~\ref{Thm:Convex_Roof}.
\end{proof}

\noindent
\textbf{Terminology:} The decomposition of a density operator $\rho$ with respect to a convex roof 
extension of an entanglement measure $\mu$ which is initially defined on the pure states of a multipartite 
Hilbert space is usually called an \emph{Optimal Pure State Ensemble}, which is often abbreviated as OPSE \cite{V98,Uhl}.

Theorem~\ref{Thm:Convex_Roof} immediately answers the question that Gudder posed in \cite{G19}:

\begin{Cor} \label{Cor:e_OPSE}
Let $\cH = \cX \otimes \cY$ be a finite dimensional bipartite Hilbert space and consider the entanglement number
defined on pure states of $\cH$ via Equations~(\ref{E:entanglement_number_for_pure_states_I}) and 
(\ref{E:entanglement_number_for_pure_states_II}), and extended to mixed states via 
Equation~(\ref{E:entanglement_number_for_mixed_states}). Then every mixed state $\rho \in \tD (\cH)$ admits 
an OPSE.
\end{Cor}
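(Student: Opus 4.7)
The plan is to deduce this immediately from Theorem~\ref{Thm:Convex_Roof}. The only thing to verify is the single hypothesis of that theorem, namely that the entanglement number $e$, viewed as a map from the unit sphere $S(\cH)$ to $\R_{\geq 0}$, is norm-continuous on pure states. Once this is in hand, Theorem~\ref{Thm:Convex_Roof} directly asserts that the infimum in Equation~(\ref{E:entanglement_number_for_mixed_states}) is attained for every $\rho \in \tD(\cH)$, which is exactly the existence of an OPSE.

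For the continuity step I would exploit the explicit formula~(\ref{E:entanglement_number_for_pure_states_I}). Fix orthonormal bases $(\ket{x_i})$ of $\cX$ and $(\ket{y_j})$ of $\cY$, and for a pure state $\ket{\psi} \in \cH$ let $C(\psi) = (c_{i,j})$ with $c_{i,j} = \langle x_i | \langle y_j | \psi \rangle$. Each entry $c_{i,j}$ is a bounded linear functional of $\ket{\psi}$, so the assignment $\ket{\psi} \mapsto C(\psi)$ is norm continuous from $S(\cH)$ into the finite-dimensional matrix space. The map $C \mapsto \tTr(|C|^4) = \tTr((C^* C)^2)$ is then a polynomial in the real and imaginary parts of the entries of $C$, hence continuous. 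Since the resulting quantity lies in $[0,1]$ for any unit vector $\ket{\psi}$, composing with the continuous function $t \mapsto \sqrt{1-t}$ on $[0,1]$ yields the required norm continuity of $e$ on $S(\cH)$.

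There is no real obstacle here: the corollary reduces to a routine continuity check that is already implicit in Gudder's closed-form expression. All the substantive work was carried out in Theorem~\ref{Thm:Convex_Roof}, where $\tau$-compactness of $\cC\cD(\rho)$ and continuity of the objective $g$ were established in general; the content of the corollary is simply that the specific choice $\mu = e$ satisfies the hypothesis, so the general extremum-attainment result applies and delivers the OPSE.
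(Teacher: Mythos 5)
Your proposal is correct and follows the paper's own route exactly: the paper likewise deduces the corollary by noting that Equations~(\ref{E:entanglement_number_for_pure_states_I}) and (\ref{E:entanglement_number_for_pure_states_II}) define a norm-continuous function on pure states and then invoking Theorem~\ref{Thm:Convex_Roof}. The only difference is that you spell out the continuity check (coefficient map, polynomial in the entries, composition with $t\mapsto\sqrt{1-t}$ on $[0,1]$) which the paper dismisses as obvious.
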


\begin{proof}
It is obvious that Equations~(\ref{E:entanglement_number_for_pure_states_I}) and 
(\ref{E:entanglement_number_for_pure_states_II}) define a norm-continuous function $e$ on the pure states of
$\cH$. Hence the result follows from Theorem~\ref{Thm:Convex_Roof}.
\end{proof}

\begin{Cor} \label{Cor:faithful}
Let $\cH$ be a finite dimensional multipartite Hilbert space 
i.e. $\cH$ is the tensor product of other Hilbert spaces and we implicitly refer to this decomposition when we mention factorable pure states of $\cH$ or separable states of $\cH$.
Let $\mu$ be a norm-continuous function on the pure states of $\cH$ with values 
in $\R_{\geq 0}$ which is 
faithful, in the sense that it only vanishes on the factorable pure states. Extend $\mu$ on the set of all states of $\cH$ via the convex roof construction. Then $\mu$ is faithful, i.e. for any density matrix $\rho$, we have 
$\mu(\rho)=0$ if and only if $\rho$ is separable. 
\end{Cor}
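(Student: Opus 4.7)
The plan is to prove the two directions of the equivalence separately, using Theorem~\ref{Thm:Convex_Roof} as the key tool for the nontrivial direction.

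For the easy direction, suppose $\rho$ is separable. Then by definition $\rho$ admits a convex decomposition $\rho = \sum_i \lambda_i |\psi_i\rangle\langle\psi_i|$ in which each $\psi_i$ is factorable. By hypothesis on $\mu$, we have $\mu(\psi_i) = 0$ for each such $i$, so this particular decomposition contributes $\sum_i \lambda_i \mu(\psi_i) = 0$ to the convex roof infimum. Since the convex roof is a non-negative quantity, it follows that $\mu(\rho) = 0$.

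For the converse, suppose $\mu(\rho) = 0$. Here I would invoke Theorem~\ref{Thm:Convex_Roof}: because $\mu$ is norm-continuous on the pure states of the finite dimensional space $\cH$, the infimum in the convex roof construction is actually attained. Hence there exists an OPSE, that is, a sequence $\{\lis, \{\psi_i\}_{i=0}^\infty\} \in \cC\cD(\rho)$ such that
\begin{equation}
\sumi \lambda_i \mu(\psi_i) = 0.
\end{equation}
Since each summand is non-negative, every term must vanish, so $\lambda_i \mu(\psi_i) = 0$ for all $i$. For the indices with $\lambda_i > 0$, this forces $\mu(\psi_i) = 0$, and faithfulness of $\mu$ on pure states then guarantees that $\psi_i$ is factorable. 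The terms with $\lambda_i = 0$ contribute nothing to $\rho$, so the original decomposition exhibits $\rho$ as a convex combination of factorable pure states, i.e., as a separable state.

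The only subtle point, and arguably the main obstacle were it not for the preceding theorem, is that without the attainment of the infimum we would only have \emph{approximate} decompositions with small total weighted $\mu$-value, from which one cannot directly conclude that $\rho$ is separable (separability is a closed condition, but one would still have to pass to a limit of decompositions and extract factorable terms carefully). Theorem~\ref{Thm:Convex_Roof} bypasses this difficulty by producing an exact OPSE, after which the argument reduces to the observation that a non-negative sum vanishes only when each term does.
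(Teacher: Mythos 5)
Your proof is correct and follows essentially the same route as the paper's: the separable direction uses the definition of the convex roof as an infimum over decompositions, and the converse invokes Theorem~\ref{Thm:Convex_Roof} to obtain an exact OPSE whose vanishing weighted sum forces each pure component to be factorable. Your handling of the indices with $\lambda_i = 0$ is in fact slightly more careful than the paper's, which asserts $\mu(\psi_i)=0$ for every $i$ without that caveat.
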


\begin{proof}
$\implies)$ Suppose that $\rho \in \tD(\cH)$ with $\mu(\rho)=0$. Then by Theorem~\ref{Thm:Convex_Roof}, 
there exist some \mbox{$\{ (\lambda)_i, (\psi_i) \} \in \cC\cD (\rho)$} such that 
\begin{equation}
\displaystyle\sum_i \lambda_i \mu(\psi_i) = \mu(\rho).
\end{equation}
Since $\mu$ only takes non-negative values, we obtain that $\mu(\psi_i)=0$ for each $i$. Since $\mu$ is assumed 
to be faithful on the pure states, this implies that each $\psi_i$ is factorable, and so $\rho$ is the convex 
combination of some factorable pure states and is therefore separable. 

$\impliedby)$ Conversely, suppose that $\rho$ is separable. Then $\rho$~is the convex combination of some factorable pure states in $\cH$. Let $\rho= \sum_i \lambda_i \psi_i$ where $\lambda_i$'s are convex coefficients 
and $\psi_i$'s are factorable pure states of $\cH$. Since $\mu$ is faithful on the pure states of $\cH$, we have that $\mu (\psi_i)=0$ for every~$i$. Then 
\begin{equation}
\sum_i \lambda_i \mu(\psi_i) =0 
\end{equation}
which, by the definition of the convex roof extension, implies that $\mu(\rho) =0$ since each $\mu$ takes on nonnegative values 
for each $i$. 
\end{proof}

\section{Other Applications to Convex Roof Extensions} \label{Sec:applications}

Now that we have shown the existence of OPSE for general norm-continuous functions on pure states of finite dimensional Hilbert spaces 
we can present some applications besides Corollary~\ref{Cor:e_OPSE}. In literature the existence of OPSE has many times
been taken for granted, but other times it has been questioned \cite{G19}, and other times claimed in a particular 
special case \cite{Uhl}. 

Recall that for Hilbert spaces $\cA$ and $\cB$ the \emph{entanglement entropy} $\ent$ of a pure state $\ket\psi\in \cA\otimes\cB$ is defined by

\begin{widetext}
\begin{equation}
\text{S}(\psi)=-\tTr\left( \tTr_\cA(\ket\psi\bra\psi) \log (\tTr_\cA \ket\psi\bra\psi) \right) =
-\tTr\left( \tTr_\cB(\ket\psi\bra\psi) \log (\tTr_\cB \ket\psi\bra\psi) \right)
\end{equation}
\end{widetext}

\noindent The entanglement entropy is then extended by the convex roof construction as in Equation~(\ref{E:convexroof}) by defining 
\begin{equation}
\ent(\rho) = \inf 
 \left\{ \sumi \li \ent(\psi_i) : \{ \lis, \{\psi_i\}_{i=0}^\infty \} \in \cC\cD (\rho)
 \right\} 
\end{equation}
for all density operators $\rho\in\tD(\cA\otimes\cB)$. The convex roof extension of the entanglement entropy
is also often called the $\emph{entanglement of formation}$. It has been shown that the von Neumann entropy,
$\text{H}(\rho)=-\tTr\rho\log\rho$, is a continuous mapping on density operators \cite{FAN}. Moreover 
it is well known that the
partial trace is a bounded linear operator from the space of trace class operators on a bipartite Hilbert
space to the space of trace class operators of one of its parts, and therefore continuous.
Thus the entanglement entropy is continuous on pure states since it is the composition of two continuous
maps, and therefore, by Theorem~\ref{Thm:Convex_Roof}, it must always exhibit OPSE. 

Next we apply our result to the convex roof extended \emph{negativity} \cite{LCDJ,VW} of a state. Recall that the negativity \cite{VW} of a state $\rho \in \tD(\cA\otimes\cB)$ is defined by 
\begin{equation}
\cN(\rho)=\frac{1}{d-1} (\| \rho ^{T_\cB} \| -1 )
\end{equation}
where $d=\min\{ \tDim\cA , \tDim\cB \}$, $\rho^{T_\cB}$ is the partial transpose of $\rho$ on the space $\tD(\cB)$, and the norm is the Hilbert--Schmidt norm on operators on $\cA\otimes\cB$. 
While $\cN$ is indeed defined for all density operators, it cannot distinguish bound entangled states 
and separable states. One solution \cite{LCDJ} to this problem is to first define the negativity for 
pure states $\ket\psi\in\cA\otimes\cB$ by \begin{equation}
\cN(\psi)=\frac{1}{d-1} (\| \ket\psi\bra\psi ^{T_\cB} \| -1 )
\end{equation}
then extend the domain of $\cN$ by the convex roof construction. This definition of $\cN$ can then 
distinguish bound entangled and separable states and is an entanglement measure \cite{LCDJ}. Now since the 
norm of a Hilbert space is obviously a norm-continuous function, and partial transpose is also continuous, 
it follows that $\cN$ is continuous on pure states. Therefore, by Theorem~\ref{Thm:Convex_Roof}, 
the convex roof extended negativity of a state must admit OPSE. 

We can also apply our result to the family of concurrence monotones $C_k$ \cite{GG} of a bipartite pure state $\psi\in\cA\otimes\cB$ which are defined by 
\begin{equation}
C_k(\psi) = \left(\dfrac{\ent_k (\lambda_1,\dots,\lambda_d)}{S_k(\frac{1}{d},\dots,\frac{1}{d})} \right)^\frac{1}{k}
\end{equation}
where $\ent_k$ is the $k$-th elementary symmetric polynomial \cite{DE}, the $\lambda_j $ are the Schmidt 
coefficients of $\psi$, $d = \min\{\tDim\cA, \tDim\cB\} $, and $k$ ranges from 1 to $d$. The $C_k$'s are then extended to the set of all density operators on
\mbox{$\cA\otimes\cB$} by the convex roof construction as well. 
Recall that the $k$-th elementary symmetric polynomial of $d$ variables is defined by 
$\ent_k(x_1, \ldots , x_d)= \sum_{1\leq i_1<i_2<\ldots <i_k \leq d} x_{i_1}x_{i_2}\ldots x_{i_k}$.
Since the $C_k$ are just the $1/k$-th powers of the normalized elementary symmetric polynomials 
in the Schmidt coefficients of a state, they are continuous on $\cA\otimes\cB$ for each~$k$. Thus, 
by Theorem~\ref{Thm:Convex_Roof}, the generalized concurrence monotones also admit OPSE. 

A particularly interesting example is the \emph{geometric measure of entanglement} \cite{BL}. Let $\cH=\cX_1\otimes \cdot \cdot\cdot \otimes \cH_n$ be the tensor product of some finite dimensional Hilbert spaces $H_i$. Then for pure states $\ket\psi\in\cH$, the \emph{geometric measure of entanglement} is a family of entanglement measures defined by 
\begin{equation}
\geo_{k_1,\dots,k_n}(\psi) = \sup_{P_1,\dots,P_n} ||P_1 \otimes \cdot\cdot\cdot\otimes P_n \ket\psi ||^2 
\end{equation} 
where each $P_j$ is a rank $k_j$ orthogonal projector on the space $H_j$ for each index $j$. The supremum in this
definition is achieved if $\cH$ is a finite dimensional Hilbert space, since the set of projections of a finite
dimensional Hilbert space is compact with the norm topology. This measure is an example of an 
\emph{increasing LOCC monotone}; i.e., \mbox{$\geo$~cannot} \emph{decrease} under LOCC channels. While at first 
glance such monotones seem to do the opposite of what we discuss in Section~\ref{Sec:LOCCmonotones} 
there is a one to one correspondence between the set increasing LOCC monotones and the set of (decreasing) 
LOCC monotones (the type discussed in Section~\ref{Sec:LOCCmonotones}). To see this relationship, consider 
the map $\mu\mapsto\sup_{\psi}\mu(\psi) - \mu$ for all increasing LOCC monotones~$\mu$. We leave 
it to the reader to verify that this map is indeed a bijection between the two types of non-negative and 
bounded monotones. Another difference between the two types of monotones is that increasing monotones are 
extended to general density operators by a concave roof construction:
\begin{equation}
\mu(\rho) = \sup \left\{ \sum_{i=1}^\infty \lambda_i\mu(\psi_i) : \{ (\lambda_i)_{i=1}^\infty , (\psi_i)_{i=1}^\infty \}\in \mathcal{CD}(\rho) \right\} .
\end{equation} 
But because of the correspondence between the two types of monotones, we have also shown the existence of OPSE for concave roof constructions as well. To apply our result to the geometric measure of entanglement, we must first show that it is continuous on the set of pure states of a general multipartite space $\cH$. 
\begin{Prop}
Let $\cH=H_1\otimes \cdot\cdot\cdot \otimes \cH_n$ be a tensor product of finite dimensional Hilbert space, then $\geo_{k_1 \dots k_n}$ is continuous for all choices of $k_j\leq \tDim(\cH_j)$ for each $j$. 
\end{Prop}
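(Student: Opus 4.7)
The plan is to realize $\geo_{k_1,\dots,k_n}$ as the supremum of a jointly continuous function over a compact parameter space, and then invoke the standard fact that such a supremum is continuous in the remaining variable.

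First I would fix notation: let
\begin{equation}
K_j = \{ P \in \cB(\cH_j) : P = P^* = P^2,\ \tTr(P) = k_j \}
\end{equation}
be the set of rank-$k_j$ orthogonal projectors on $\cH_j$, and set $K = K_1 \times \cdots \times K_n$, topologized by the operator norm on each factor. I would show that each $K_j$ is norm-compact: since $\tDim \cH_j < \infty$, $\cB(\cH_j)$ is finite dimensional, and $K_j$ is closed (the conditions $P=P^*$, $P^2=P$ are closed, and since $\tTr$ is continuous on the set of projectors and equals the rank, the condition $\tTr(P) = k_j$ is closed) and bounded ($\|P\| \leq 1$). Hence $K$ is compact. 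I would then define $F: K \times S(\cH) \to \R_{\geq 0}$ by
\begin{equation}
F(P_1,\dots,P_n,\psi) = \| P_1 \otimes \cdots \otimes P_n \ket{\psi} \|^2
\end{equation}
so that $\geo_{k_1,\dots,k_n}(\psi) = \sup_{(P_1,\dots,P_n) \in K} F(P_1,\dots,P_n,\psi)$.

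The next step is to argue that $F$ is jointly continuous on $K \times S(\cH)$. This is routine: the map $(P_1,\dots,P_n) \mapsto P_1 \otimes \cdots \otimes P_n$ is norm-continuous on a bounded set of bounded operators (multilinearity plus the submultiplicativity of the operator norm on tensor products), the evaluation $(A, \psi) \mapsto A\ket{\psi}$ is continuous, and the squared norm is continuous; composing these gives joint continuity of $F$. I do not expect any difficulty here, just a brief three-term telescoping using $\| P_1 \otimes \cdots \otimes P_n \| \leq 1$ to separate the $\psi$-dependence from the $P$-dependence.

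Finally, I would invoke the elementary lemma that if $F: K \times Y \to \R$ is continuous and $K$ is compact, then $y \mapsto \sup_{k \in K} F(k,y)$ is continuous on $Y$. (This is a special case of Berge's maximum theorem, and for this upper-semicontinuous/lower-semicontinuous direction follows quickly from joint continuity plus compactness: given $\psi_m \to \psi$, select projector tuples $\bm P^{(m)}$ attaining each $\geo(\psi_m)$ and a convergent subsequence in $K$, then pass to the limit using continuity of $F$.) Applied with $Y = S(\cH)$, this yields continuity of $\geo_{k_1,\dots,k_n}$. The main obstacle, if any, is the bookkeeping required for the joint continuity estimate of $F$ when several factors vary simultaneously, but this is a routine multilinear perturbation argument rather than a conceptual difficulty.
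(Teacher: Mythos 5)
Your proposal is correct, but it reaches continuity by a different key lemma than the paper. The paper's proof never uses compactness of the projector sets in the continuity step: it passes directly from $\left|\sup_{P}\|P_1\otimes\cdots\otimes P_n\ket{\psi_i}\|^2-\sup_{P}\|P_1\otimes\cdots\otimes P_n\ket{\psi}\|^2\right|$ to $\sup_{P}\left|\,\|P_1\otimes\cdots\otimes P_n\ket{\psi_i}\|^2-\|P_1\otimes\cdots\otimes P_n\ket{\psi}\|^2\right|$ and then, via the reverse triangle inequality and $\|P_1\otimes\cdots\otimes P_n\|\leq 1$, obtains the bound $2\|\psi_i-\psi\|$ \emph{uniformly} in the projectors, so that $\geo_{k_1\dots k_n}$ is in fact Lipschitz with constant $2$ on the unit sphere. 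You instead establish compactness of the parameter space $K$ of projector tuples, joint continuity of $F$, and invoke the Berge-type lemma that a supremum of a jointly continuous function over a fixed compact set is continuous; your subsequence extraction handles upper semicontinuity, and (as you should state explicitly to close the argument) lower semicontinuity follows by evaluating at a fixed near-optimal tuple for the limit point $\psi$. Your route is softer and more general---it works whenever one has a sup of a jointly continuous function over a compact index set, even without a uniform modulus of continuity---while the paper's route is shorter, avoids the compactness bookkeeping for $K_j$ (compactness is used in the paper only to note that the supremum is attained), and yields an explicit quantitative Lipschitz bound. Both are complete proofs of the proposition.
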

\begin{proof}
Let $(\psi_i)_{i=1}^\infty$ be a sequence of pure states in $\cH$ converging to $\psi$ for some pure state $\psi$. Then

\begin{widetext}
\begin{equation}
\begin{split}
 |\geo_{k_1\dots k_n}(\psi_i)-\geo_{k_1\dots k_n}(\psi) | &= | \sup_{P_1\dots P_n}||P_1\otimes \cdot\cdot\cdot\otimes P_n \ket{\psi_i}||^2 - ||Q_1\otimes \cdot\cdot\cdot \otimes Q_n \ket\psi ||^2 |\\
 &\leq \sup_{P_1\dots P_n}|\text{ } ||P_1\otimes \cdot\cdot\cdot \otimes P_n \ket{\psi_i}||^2 - ||P_1\otimes \cdot\cdot\cdot\otimes P_n \ket{\psi}||^2 |\\
\end{split}
\end{equation}
\end{widetext}

\noindent Now using the reverse triangle inequality and the fact that orthogonal projectors have operator norm 1, the right hand side of the last inequality is less than or equal to

\begin{widetext}
\begin{equation}
\begin{split}
 \sup_{P_1\dots P_n} &|\text{ } ||P_1\otimes \cdot\cdot\cdot \otimes P_n \ket{\psi_i}|| - ||P_1\otimes \cdot\cdot\cdot\otimes P_n \ket{\psi}|| \text{ }|\cdot |\text{ } ||P_1\otimes \cdot\cdot\cdot\otimes P_n \ket{\psi_i}|| + ||P_1\otimes \cdot\cdot\cdot\otimes P_n \ket{\psi}|| \text{ }|\\
 &\leq 2 \sup_{P_1\dots P_n} |\text{ } ||P_1\otimes \cdot\cdot\cdot \otimes P_n \ket{\psi_i}|| - ||P_1\otimes \cdot\cdot\cdot\otimes P_n \ket{\psi}|| \text{ }|\\
 &\leq 2\sup_{P_1\dots P_n} || P_1\otimes \cdot\cdot\cdot \otimes P_n(\ket{\psi_i}-\ket\psi) || \\
 &\leq 2\sup_{P_1\dots P_n} || P_1\otimes \cdot\cdot\cdot \otimes P_n ||\cdot || \psi_i-\psi ||\\
 & \leq 2 || \psi_i-\psi ||
\end{split}
\end{equation}
\end{widetext}

\noindent Now letting $i$ tend to infinity, it follows that \mbox{$|\geo_{k_1\dots k_n}(\psi_i)-\geo_{k_1\dots k_n}(\psi) |\rightarrow 0 $}, implying that $\geo_{k_1\dots k_n}$ is continuous for all possible combinations of ranks $k_1\dots k_n$ of local projections on $\cH$. 
\end{proof}
Thus Theorem~\ref{Thm:Convex_Roof} applies to the geometric measure of entanglement, and therefore this entanglement measure also exhibits OPSE. 

While convex roof constructions often appear in the discussion of entanglement measures, they also arise in other contexts in quantum information theory. For instance, the entropy of a channel $\Phi$ with respect to a state~$\rho$ \cite{OHYA} is defined by

\begin{widetext}
\begin{equation}
\ch_\rho (\Phi) = \ch\left( \Phi(\rho) \right) - \inf\left\{ \sum_{i=1}^\infty \lambda_i\ch\left(\Phi(\psi_i)\right) : \{ (\lambda_i)_{i=1}^\infty , (\psi_i)_{i=1}^\infty \}\in \mathcal{CD}(\rho) \right\} 
\end{equation}
\end{widetext}

\noindent where $\ch(\sigma)$ is the von Neumann entropy of a state $\sigma$. The existence of OPSE for this case was proven by Uhlmann \cite{Uhl} but our Theorem~\ref{Thm:Convex_Roof} is applicable to this problem as well. 
Since the von Neumann entropy is continuous \cite{FAN} and since all quantum channels are continuous
\cite{A17}, $\ch\left(\Phi(\psi)\right)$ is a continuous function on pure states $\psi$ of a Hilbert space. 
Thus $\ch_\rho(\Phi)$ exhibits OPSE for all states $\rho$ and fixed channels $\Phi$, verifying 
Uhlmann's result. 

\section{LOCC Channels and LOCC Monotones} \label{Sec:LOCCmonotones}

In this section we show that Gudder's entanglement number is an LOCC monotone using a slight extension of a 
criterion due to Vidal \cite{V98}. Moreover we simplify Vidal's proof of this criterion. Furthermore during the course of its proof, 
we provide a representation of LOCC operations using trees, which we believe gives a better understanding to the complicated notion of LOCC channels.

Vidal \cite{V98} shows the following result: Assume that $\cX$ is a finite dimensional Hilbert space, and 
\mbox{$f:\tD (\cX) \to \R_{\geq 0}$} is a function which is invariant under unitaries (i.e.~$f(U\rho U^*)=f(\rho)$
for every $\rho \in \tD (\cX)$ and every unitary operator $U$ on $\cX$), and concave (i.e.~$f(\lambda \sigma_1 + (1-\lambda) \sigma_2) \geq \lambda f(\sigma_1) + (1-\lambda) f(\sigma_2)$ for all $\lambda \in [0,1]$ and all 
$\sigma_1, \sigma_2 \in \tD (\cX)$). Then define a function $\mu$ on pure states of $\cX_1 \otimes \cX_2$ 
where $\cX_1=\cX_2=\cX$, by 
\begin{equation}\label{E:def}
 \mu(\psi) = f(\tTr_{\cX_1} \ket\psi \bra\psi) = f(\tTr_{\cX_2} \ket\psi\bra\psi)
\end{equation}
Extend the function $\mu$ from the pure states of $\cX_1 \otimes \cX_2$ to all states of 
$\cX_1 \otimes \cX_2$ via the 
convex roof construction. If the infimum in the definition of the convex roof is always attained, (i.e. if 
OPSE's exist for every mixed state), then the extension of $\mu$ to mixed states of $\cX_1 \otimes \cX_2$ via 
the convex roof construction is an LOCC monotone.

Recall that a function $\mu$ defined on density operators of a multipartite Hilbert space $\cH$ and taking non-negative real values
is called an LOCC monotone if $\mu (\Lambda (\rho)) \leq \mu (\rho)$ for all density operators 
$\rho \in \tD (\cH)$ and all
LOCC channels $\Lambda$. The image of an LOCC channel may not be the density operators on the tensor product of two identical Hilbert spaces,
and indeed it can be easily verified that Vidal's proof extends to that case very easily. Moreover,
it is well known that for any two finite dimensional Hilbert spaces $\cX_1$ and $\cX_2$, and for every normalized
vector $\psi$ of $\cX_1 \otimes \cX_2$, the set of non-zero eigenvalues of $\tTr_{\cX_1} \ket \psi \bra \psi $ is 
equal to the set of non-zero eigenvalues of $\tTr_{\cX_2} \ket \psi \bra \psi$. Thus if the Hilbert spaces 
$\cX_1$ and $\cX_2$ are equal, (or at least have equal dimension), then the matrices 
$\tT_{\cX_1}\ket\psi \bra\psi$ and $\tTr_{\cX_2}\ket\psi\bra\psi$
are unitarily equivalent. In the proof that we present below we do not assume that the Hilbert spaces 
$\cX_1$ and $\cX_2$ are equal, or have equal dimension, and we simply assume that $f$ is concave function that depends only on the nonzero eigenvalues of the densities matrices. Moreover, our proof is simpler than Vidal's 
proof, and along the proof we give a pictorial tree representation of LOCC channels that helps to understand
this notion. More precisely, our main result of this section is the following:

\begin{Thm} \label{Thm:LOCC_monotone}
Assume that a function $f$ is defined on the density operators of all finite dimensional Hilbert spaces and takes values in non-negative real numbers. 
Assume that $f$ is concave, and it depends only the non-zero eigenvalues of its argument. 
Let $\mu$ be defined on pure states of any bipartite Hilbert space via Equation~(\ref{E:def}), and extended
to all mixed states of the bipartite Hilbert space via Equation~(\ref{E:convexroof}), and assume that the infimum 
in (\ref{E:convexroof}) is always achieved. Then $\mu$ is an LOCC monotone. 
\end{Thm}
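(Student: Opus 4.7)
The plan is to decompose an arbitrary LOCC channel into a finite tree whose nodes are single-party local instruments, induct on the depth of the tree, and so reduce the statement to monotonicity under one round. Concretely, at each node Alice (or, symmetrically, Bob) applies a local POVM $\{M_k\}$ with $\sum_k M_k^* M_k = I$, broadcasts the outcome $k$, and at each branch the next operation depends only on the classical history. Writing $p_k = \tTr\bigl((M_k^* M_k \otimes I)\rho\bigr)$ and $\rho_k = (M_k\otimes I)\rho(M_k^*\otimes I)/p_k$, the inductive step is the ensemble inequality $\sum_k p_k \mu(\rho_k) \leq \mu(\rho)$. Since $\mu$ is automatically convex on the set of states (as a convex roof), the averaged form $\mu(\Lambda(\rho)) = \mu(\sum_k p_k \rho_k) \leq \sum_k p_k \mu(\rho_k) \leq \mu(\rho)$ follows at once from the ensemble version.

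For a pure state $\ket\psi$, define $\ket{\psi_k} = (M_k\otimes I)\ket\psi/\sqrt{p_k}$ and $\sigma_k = \tTr_{\cX_1}\ket{\psi_k}\bra{\psi_k}$. A one-line cyclicity-in-the-first-factor computation gives
\begin{equation}
\sum_k p_k \sigma_k = \tTr_{\cX_1}\Bigl[\bigl(\sum_k M_k^* M_k \otimes I\bigr)\ket\psi\bra\psi\Bigr] = \tTr_{\cX_1}\ket\psi\bra\psi .
\end{equation}
Since $f$ depends only on the nonzero eigenvalues of its argument, $\mu(\psi) = f(\tTr_{\cX_1}\ket\psi\bra\psi)$ and $\mu(\psi_k) = f(\sigma_k)$, so concavity of $f$ yields $\mu(\psi) \geq \sum_k p_k \mu(\psi_k)$. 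A POVM on Bob's side is handled in the same manner after interchanging the two partial traces, which is legitimate precisely because $f$ only sees the nonzero spectrum.

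To pass to a mixed state $\rho$, invoke Theorem~\ref{Thm:Convex_Roof} to pick an OPSE $\{\lambda_i,\psi_i\}$ realizing $\mu(\rho)=\sum_i \lambda_i \mu(\psi_i)$. Applying the local POVM to each $\psi_i$ gives pure states $\psi_{i,k} = (M_k\otimes I)\ket{\psi_i}/\sqrt{p_{i,k}}$ with $p_{i,k} = \bra{\psi_i}M_k^* M_k \otimes I\ket{\psi_i}$, and the $k$-th outcome state decomposes as $\rho_k = \sum_i (\lambda_i p_{i,k}/p_k) \ket{\psi_{i,k}}\bra{\psi_{i,k}}$ with $p_k = \sum_i \lambda_i p_{i,k}$. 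Bounding $\mu(\rho_k)$ by the convex roof definition along this decomposition, summing over $k$, and then applying the pure state inequality to each $\psi_i$ yields
\begin{equation}
\sum_k p_k \mu(\rho_k) \leq \sum_i \lambda_i \sum_k p_{i,k}\mu(\psi_{i,k}) \leq \sum_i \lambda_i \mu(\psi_i) = \mu(\rho),
\end{equation}
which is the required ensemble monotonicity; convexity of $\mu$ then closes the averaged inequality.

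The step that will demand the most care is formalizing the tree representation, i.e., verifying that every LOCC channel genuinely factors through finitely many single-party rounds so that the induction on depth is clean and exhaustive. By contrast, the pure state inequality is essentially a one-line application of concavity together with the identity $\sum_k p_k \sigma_k = \tTr_{\cX_1}\ket\psi\bra\psi$, and the passage to mixed states is a routine OPSE-plus-convexity argument, so the conceptual work really sits in the tree decomposition and in the innocuous-looking remark that $f$ only depends on nonzero eigenvalues, which is exactly what allows Alice's side and Bob's side to be treated on an equal footing.
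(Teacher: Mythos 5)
Your proposal is correct and follows essentially the same route as the paper's proof: a tree decomposition of the LOCC channel, a per-node ensemble inequality obtained from an OPSE of the input, cyclicity of the partial trace, concavity of $f$, suboptimality of the induced decomposition of each outcome state, and finally convexity of the convex roof to pass from on-average monotonicity to $\mu(\Lambda(\rho))\leq\mu(\rho)$. The only difference is organizational — you isolate the pure-state inequality before combining it with the OPSE, whereas the paper runs both steps together inside the node-by-node estimate — but the mathematics is identical.
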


Before providing the proof of this result, we will provide a discussion on the structure of LOCC channels 
using the notion of the tree. 

Let $\cA$ be a Hilbert space whose states can be manipulated by only one party Alice, and $\cB$ be a Hilbert space whose states can be manipulated by only one party Bob (\emph{local operations}). Then Alice may perform quantum channels on her space which will be of the form $\Phi_\cA (X)=\sum_i A_iXA_i^*$ and Bob may perform channels on his space which may be of the form $\Phi_\cB(Y)=\sum_j B_jXB_j^{*}$ where $\sum_i A_i^* A = I_\cA$ and $\sum_j B_j^* B_j = I_\cB$. Note that for simplicity, we keep the domains and ranges of each channel vague and simply write $\cA$ to be the current space of Alice and $\cB$ to be the current space of Bob, even though these
spaces $\cA$ and $\cB$ keep changing during the application of every channel of the LOCC communication.
Now suppose that Alice and Bob share a state $\rho\in\tD(\cA\otimes\cB)$, then Alice could perform 
$\Phi_\cA$ and Bob could perform $\Phi_\cB$ and the post operation state would be 
$(\Phi_\cA\otimes\Phi_\cB)(\rho)$. But operations of this form do not explicitly allow for communication 
between the parties and therefore do not adequately describe LOCC channels. In order to incorporate 
classical communications, we must have that each party's channels depend on the other's in some way. 

Consider the following operation on a state \mbox{$\rho\in\tD(\cA\otimes\cB)$}:
Alice performs the channel $\Phi_\cA$ then measures the outcomes of her operation and sends the result to Bob via some method of classical communication. To an outside observer, the operation on~$\rho$ would appear as 
\begin{equation}
\Phi_\cA(\rho)=\sum_i \cE_{i} (\rho) \text{ where } \cE_{i}(\rho)= (A_i \otimes I_\cB)\rho (A_i^*\otimes I_\cB)
\end{equation}
The outcome states of such an operation would be given by

\begin{widetext}
\begin{equation}
\rho_i = \frac{\cE_i(\rho)}{\tTr(\cE_i(\rho))}=
\frac{(A_i\otimes I_\cA)\rho(A_i^*\otimes I_\cA)}{\tTr\left( (A_i^* A_i\otimes I_\cB )\rho \right)} 
\text{ with probability } \tTr\left( (A_i^* A_i\otimes I_\cB )\rho \right) \text{ for each } i.
\end{equation}
\end{widetext}

\noindent Alice then measures the state on her system and sends the result to Bob. Then Bob acts accordingly and applies the channel 
\begin{equation}
\Phi_i (\sigma)= \sum_j \cE_{i,j}(\sigma) \text{ where } 
\cE_{i,j}(\sigma)= (I_\cA \otimes B_{ij})\sigma(I_\cA \otimes B_{ij}^*)
\end{equation}
when he learns that the measured state was $\rho_i$. Since $\Phi_i$~is a channel for every fixed $i$, 
we obtain that, 
\begin{equation}
\sum_j B_{ij}^* B_{ij}=I_\cB 
\end{equation}
and the channel that describes the final outcome of the scenario above will be of the form 
\begin{align}
\Lambda(\rho) & =\sum_{i,j} (A_i\otimes B_{ij})\rho (A_i\otimes B_{ij}^*) \nonumber\\
& = \sum_i \sum_j \cE_{i,j} \circ \cE_i (\rho).
\end{align}

But not all LOCC channels will be this simple. This scenario was only one round of operations and the communication only went in one direction. To extend the definition to more rounds of communication going in various directions, we can iteratively extend our scenario using the following tree structure. 

Let $\cT\subset \{ \emptyset \} \cup \bigcup_{n=1}^{\infty} \N^n$ be a finite collection of finite ordered 
subsets of $\N$ including the empty set. Endow $\cT$ with a partial order $ \prec$ by defining 
$ (x_1,\dots,x_k)\prec (y_1,\dots,y_l)$ if $k<l$ and $x_i=y_i$ for all $i\in\{1,\dots,k\}$. 
Moreover we define $\emptyset \prec x$ for all $x\in\cT \backslash \{ \emptyset \}$, and we denote the
length $l(x)=k$ if $x=(x_1,\dots,x_k)$ for some $x_i\in\N$ for each $i$. Also define $l(\emptyset)=0$.
Lastly we denote the \emph{immediate successors} 
of an element $x$ by $I(x)=\{y\in\cT \big| x\prec y \text{ and } l(y)=l(x)+1 \}$ and we denote $\cF(\cT)$
to be the collection of \emph{final nodes} of the tree $\cT$ where the final nodes are the nodes which 
have no immediate successors. Using this notion of a tree, we can index all possible states that occur 
in an LOCC process by a tree $\cT$. We will also use the convention $\rho_{\emptyset}$ to represent $\rho$
before any operations have been applied. Thus an LOCC process can be visualized using the following tree

\begin{widetext}
\begin{center}
\scalebox{1.1}{
\begin{forest}
[$\rho_{\emptyset}$
 [$\rho_{ (1) }$
 [$\rho_{ (1,1) }$
 [$\cdot$]
 ]
 [$\cdot$]
 [$\rho_{ (1,K_{1,1}) }$
 [$\cdot$]
 ]
 ]
 [$\rho_{ (2) }$
 [ $\rho_{ (2,1) }$
 [$\cdot$]
 ]
 [$\cdot$]
 [ $\rho_{ (2,K_{1,2}) }$
 [$\cdot$]
 ]
 ]
 [$\cdot$]
 [$ \rho_{ (K_1 -1) }$
 [$\rho_{ (K_1 -1,1) }$
 [$\cdot$]
 ]
 [$\cdot$]
 [$\rho_{ (K_1-1,K_{1,K_1 -1}) }$
 [$\cdot$]
 ]
 ]
 [$\rho_{ (K_1) }$
 [$\rho_{ (K_1 -1,1) }$
 [$\cdot$]
 ]
 [$\cdot$]
 [$\rho_{ (K_1,K_{1,K_1})}$
 [$\cdot$]
 ]
 ]
]
\end{forest}
}
\end{center}
\end{widetext}

\noindent where $\rho_x$ represents the \emph{unnormalized} state that occurs after the $x$-th Krauss operator has been applied to previous \emph{normalized} state. Moreover, each row of the tree represents a single party applying conditionally operations while the other party does nothing. Hence, each node in the tree $\rho_y$ is of the form
\begin{align}
\rho_y &=(A_y\otimes I_\cB )\dfrac{\rho_x}{\tTr\rho_x}(A_y^* \otimes I_\cB)\text{ or } \rho_y \nonumber \\
& = (I_\cA \otimes B_y)\dfrac{\rho_x}{\tTr\rho_x} (I_\cA \otimes B_y^*)
\end{align} 
where $y\in I(x)$, $A_x$ is an operation on Alice's space, and $B_x$ is an operation on Bob's space. The physical significance of this construction being that $\dfrac{\rho_y}{\tTr\rho_y}$ is the state observed with probability $\tTr\rho_y$ after the previous party's operations. In order to refer to the Krauss operators at each node, we can define the maps $\cE_y$ by 
\begin{align}
\cE_y(X) & =(A_y\otimes I_\cB )X(A_y^* \otimes I_\cB)\text{ or } \cE_y(X) \nonumber\\
& = (I_\cA \otimes B_y)X(I_\cA \otimes B_y^*)
\end{align}
depending on which party is applying operations at node~$y$. 
Thus using this notation, we express an LOCC channel $\Lambda$ as 
\begin{equation}
\Lambda(\rho) = \sum_{y_1 \in I(\emptyset)}\cdot\cdot\cdot\sum_{y_n\in I(y_{n-1})} \cE_{y_n}\circ\cdot\cdot\cdot\circ\cE_{y_1}(\rho_\emptyset)
\end{equation}
where the $y_n$ are the final nodes of the tree $\cT$. Then to prove LOCC monotonicity, we need to ensure 
\begin{equation}
\mu\left( \sum_{y_1 \in I(\emptyset)}\cdot\cdot\cdot\sum_{y_n\in I(y_{n-1})} \cE_{y_n}\circ\cdot\cdot\cdot\circ\cE_{y_1}(\rho_\emptyset ) \right)\leq\mu(\rho_\emptyset).
\end{equation}
But before we prove this we will need the following lemma to ensure the self containment of the proof of our main result:
\begin{Lem}
Let $\cH=\cA\otimes\cB$ be a finite dimensional Hilbert space, then the partial traces $\tTr_\cA$ and $\tTr_\cB$ are cyclic on the spaces $\cA$ and $\cB$ respectively. 
\end{Lem}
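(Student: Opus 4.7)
The plan is to interpret cyclicity of $\tTr_\cA$ \emph{on $\cA$} as the identity $\tTr_\cA((A\otimes I_\cB) X) = \tTr_\cA(X(A\otimes I_\cB))$ for every operator $X$ on $\cA\otimes\cB$ and every operator $A$ on $\cA$, with the symmetric statement for $\tTr_\cB$ and operators on $\cB$. This is the version that would be invoked in the proof of Theorem~\ref{Thm:LOCC_monotone}: whenever a Kraus map $\cE_y(X) = (A_y \otimes I_\cB) X (A_y^*\otimes I_\cB)$ is traced over $\cA$, one wants to bring the factors across the partial trace to obtain $\tTr_\cA((A_y^*A_y \otimes I_\cB) X)$, which follows from two applications of the identity above.

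To establish the identity I would fix orthonormal bases $\{|a_i\rangle\}$ of $\cA$ and $\{|b_k\rangle\}$ of $\cB$ and use the standard coordinate formula $\tTr_\cA(Y) = \sum_i (\langle a_i|\otimes I_\cB)\,Y\,(|a_i\rangle\otimes I_\cB)$. Expanding $A = \sum_{m,n} A_{mn}|a_m\rangle\langle a_n|$ and denoting the matrix coefficients of $X$ in the product basis by $X_{ij,kl} = (\langle a_i|\otimes\langle b_k|) X (|a_j\rangle\otimes|b_l\rangle)$, a direct calculation shows that the $(b_k,b_l)$-entry of $\tTr_\cA((A\otimes I_\cB) X)$ equals $\sum_{m,n} A_{mn} X_{nm,kl}$, while the $(b_k,b_l)$-entry of $\tTr_\cA(X(A\otimes I_\cB))$ equals $\sum_{i,m} X_{im,kl} A_{mi}$. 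If one defines $M^{(k,l)}$ to be the operator on $\cA$ with matrix entries $M^{(k,l)}_{ij} = X_{ij,kl}$, then the two sums are precisely $\tTr(A M^{(k,l)})$ and $\tTr(M^{(k,l)} A)$ respectively, so they coincide by the ordinary cyclicity of the trace on $\cA$.

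The analogous claim for $\tTr_\cB$ and operators on $\cB$ is obtained verbatim by swapping the roles of the two factors. I do not anticipate any genuine obstacle: the lemma is essentially a bookkeeping exercise that transfers the cyclicity of the full trace on one tensor factor to the partial trace over that factor, and the only care needed is in keeping the tensor indices straight so that the full-trace cyclicity is clearly applicable.
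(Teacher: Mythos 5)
Your argument is correct, and it rests on the same underlying idea as the paper's proof---transferring the cyclicity of the ordinary trace on the $\cA$ factor to the partial trace $\tTr_\cA$---but it takes a genuinely different and somewhat stronger route. The paper verifies the identity only on elementary tensors, computing $\tTr_{\cA}(A_1A_2\otimes B_1B_2)=\tTr(A_1A_2)\,B_1B_2=\tTr(A_2A_1)\,B_1B_2=\tTr_{\cA}(A_2A_1\otimes B_1B_2)$, and then relies on linearity later (in Corollary~\ref{Cor:cyclicity} the general state $\rho$ is expanded in a product basis so that the lemma can be applied term by term). You instead prove the one-sided identity $\tTr_\cA((A\otimes I_\cB)X)=\tTr_\cA(X(A\otimes I_\cB))$ for an \emph{arbitrary} operator $X$ on $\cA\otimes\cB$, by computing the $(b_k,b_l)$ matrix entries of both sides and recognizing them as $\tTr(AM^{(k,l)})$ and $\tTr(M^{(k,l)}A)$ for the $\cA$-blocks $M^{(k,l)}$ of $X$; your index computations check out. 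What your version buys is that it already contains Corollary~\ref{Cor:cyclicity} (two applications of your identity give $\tTr_\cA((A\otimes I_\cB)X(A^*\otimes I_\cB))=\tTr_\cA((A^*A\otimes I_\cB)X)$, exactly what is invoked in the proof of Theorem~\ref{Thm:LOCC_monotone}), at the cost of a heavier coordinate bookkeeping; the paper's version is lighter but, as literally stated, covers only product operators and defers the linearity step to the corollary.
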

\begin{proof}
Let $A_1\in\tL(\cA_1,\cA_2)$, $A_2\in\tL(\cA_2,\cA_1)$, $\cB_1\in\tL(\cB_1,\cB_2)$, and $\cB_2\in\tL(\cB_2,\cB_1)$, then using the cyclicity of the trace operator on $\cA$ we have that
\begin{align}
\tTr_{\cA_2}\left( A_1 A_2 \otimes B_1 B_2 \right)
& = (\tTr\otimes I_{\cB_2})\left( A_1 A_2 \otimes B_1 B_2 \right) \nonumber\\
& = \tTr \left(A_1 A_2\right) B_1 B_2 \nonumber\\
& = \tTr\left(A_2 A_1\right) B_1 B_2 \nonumber\\
& = (\tTr \otimes I_{\cB_2})\left( A_2 A_1 \otimes B_1 B_2 \right) \nonumber\\
& = \tTr_{\cA_1}\left( A_2 A_1 \otimes B_1 B_2 \right) 
\end{align}
which completes the proof.
\end{proof}
Henceforth, instead of writing the precise space that is being traced away, we will use the convention that $\tTr_\cA$ traces out the space on the left, no matter what Hilbert space $\cA$ is, and $\tTr_\cB$ traces out the Hilbert space on the right. With this convention, the previous lemma can be written as
\begin{equation}
\tTr_{\cA}\left( A_1 A_2 \otimes B_1 B_2 \right) = \tTr_{\cA}\left( A_2 A_1 \otimes B_1 B_2 \right)
\end{equation}

Using this lemma we can prove another useful property of the partial trace.

\begin{Cor} \label{Cor:cyclicity}
Let $A\in \tL(\cA_1,\cA_2)$, and $\rho\in\tD(\cA_1\otimes\cB)$, then 
\begin{equation}
\tTr_\cA \left((A\otimes I_\cB)\rho(A^* \otimes I_\cB) \right) = \tTr_\cA \left( (A^*A\otimes I_\cB)\rho\right) 
\end{equation}
\end{Cor}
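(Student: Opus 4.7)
The plan is to reduce the statement to elementary tensors and invoke the preceding lemma on each summand. First I would write $\rho$ as a finite sum $\rho=\sum_k X_k\otimes Y_k$ with $X_k\in\tL(\cA_1)$ and $Y_k\in\tL(\cB)$; this is possible because the spaces are finite dimensional and $\tL(\cA_1\otimes\cB)\cong\tL(\cA_1)\otimes\tL(\cB)$. Using multiplicativity of the tensor product,
\begin{equation}
(A\otimes I_\cB)\rho(A^*\otimes I_\cB)=\sum_k AX_kA^*\otimes Y_k,
\end{equation}
\begin{equation}
(A^*A\otimes I_\cB)\rho=\sum_k A^*AX_k\otimes Y_k.
\end{equation}
By linearity of the partial trace, it therefore suffices to establish the equality $\tTr_\cA(AX_kA^*\otimes Y_k)=\tTr_\cA(A^*AX_k\otimes Y_k)$ for every $k$.

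Next I would apply the preceding lemma twice to a single summand, always keeping the $\cB$-side factor equal to $Y_k=Y_k\cdot I_\cB$ so that the lemma's hypotheses are met. Factoring $AX_kA^*=A\cdot(X_kA^*)$ and invoking the lemma gives
\begin{equation}
\tTr_\cA(AX_kA^*\otimes Y_k)=\tTr_\cA(X_kA^*A\otimes Y_k).
\end{equation}
Factoring $A^*AX_k=(A^*A)\cdot X_k$ and invoking the lemma again gives
\begin{equation}
\tTr_\cA(A^*AX_k\otimes Y_k)=\tTr_\cA(X_kA^*A\otimes Y_k).
\end{equation}
The right-hand sides agree, so the two partial traces agree for each $k$, and summing over $k$ finishes the proof.

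The only real subtlety, and the point I would be careful about, is bookkeeping of the Hilbert spaces: since $A\in\tL(\cA_1,\cA_2)$, the operator $AX_kA^*$ lives on $\cA_2$ while $A^*AX_k$ lives on $\cA_1$, so strictly speaking the two partial traces are taken on different spaces. Under the convention stated immediately before the corollary, however, $\tTr_\cA$ simply traces out whichever Hilbert space appears as the left tensor factor, and the preceding lemma was proved in exactly that generality, so it applies verbatim in both directions. After this observation the argument is essentially a one-line consequence of the lemma plus linearity, with no computational obstacle.
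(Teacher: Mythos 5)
Your proof is correct and follows essentially the same route as the paper's: decompose $\rho$ into elementary tensors by linearity and invoke the preceding lemma on each summand (the paper uses the matrix units $\ket{i}\bra{k}\otimes\ket{j}\bra{l}$ and a single application of the lemma with the factorization $A\ket{i}\bra{k}\cdot A^*$, whereas you use general simple tensors and two applications meeting at the common middle term $X_kA^*A\otimes Y_k$). Your closing remark about the $\tTr_\cA$ convention for which space is traced out is exactly the bookkeeping point the paper settles just before stating the corollary, so nothing is missing.
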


\begin{widetext}
\begin{proof}
Let $\{\ket i\}_{i\in I}$ and $\{\ket j\}_{j\in J}$ be orthonormal bases for the Hilbert spaces $\cA_1$ and $\cB$ respectively. Then every $\rho\in \tD(\cA_1\otimes\cB) $ can be written as 
$\rho=\sum_{i,j,k,l} \rho_{i,j,k,l} \ket{i}\bra{k}\otimes \ket{j}\bra{l}$. Thus
\begin{align}
 \tTr_\cA \left( (A\otimes I_\cB) \rho (A^*\otimes I_\cB) \right) &= \sum_{i,j,k,l}\rho_{i,j,k,l}\tTr_\cA \left( (A\otimes I_\cB) (\ket{i}\bra{k} \otimes \ket{j}\bra{l})(A^*\otimes I_\cB)\right) \nonumber \\
 & =\sum_{i,j,k,l}\rho_{i,j,k,l}\tTr_\cA \left( A\ket{i}\bra{k}A^* \otimes \ket{j}\bra{l} \right) \nonumber \\
 &=\sum_{i,j,k,l}\rho_{i,j,k,l}\tTr_\cA \left( A^*A\ket{i}\bra{k} \otimes\ket{j}\bra{l}\right) 
 = \tTr_\cA \left( (A^*A\otimes I_\cB)\rho\right)
\end{align}
which is the desired result.
\end{proof}
Finally we are ready to give the 
\begin{proof}[Proof of Theorem~\ref{Thm:LOCC_monotone}]
Let $\rho\in\tD(\cA\otimes\cB)$ and let $\Lambda$ be an LOCC channel with its corresponding tree $\cT$ with root $\rho_\emptyset = \rho$. In order to show that $\mu$ decreases under $\Lambda$, we will first show that 
\begin{equation}
\mu \left(\frac{\rho_x}{\tTr\rho_x}\right) \geq \sum_{y\in I(x)} \tTr\rho_y\mu \left(\frac{\rho_y}{\tTr\rho_y}\right) 
\end{equation} 
where $x\in\cT\setminus\cF(\cT)$ and the $\rho_x$ and $\rho_y$ are the unnormalized states at nodes $x$ and $y$ 
respectively. Thus we need to show that $\mu$ decreases on average at each node in the tree. So let $x\in\cT$ be 
a non-final node and let $\rho_x$ be the unnormalized state at node $x$. Then without loss of generality we 
can write 
\begin{equation}
\rho_y= (A_y \otimes I_\cB)\dfrac{\rho_x}{\tTr\rho_x}(A_y^*\otimes I_\cB) 
\end{equation}
for each $y\in I(x)$, where 
\begin{equation} \label{E:SumToIdentity}
\sum_y A_y^* A_y = I_\cA.
\end{equation}
Thus the normalized state at node $y$ will be of the form $\frac{\rho_y}{\tTr(\rho_y)}$. Now let 
$\{ (\lambda_i)_i , (\psi_i)_i \}$ be an OPSE for $\dfrac{\rho_x}{\tTr\rho_x}$. Then
\begin{equation}
 \mu \left(\frac{\rho_x}{\tTr\rho_x} \right) =\sum_i \lambda_i \mu(\psi_i) 
 = \sum_i \lambda_i f\left( \tTr_\cA \ket{\psi_i} \bra{\psi_i} \right) 
 = \sum_i \lambda_i f\left( \tTr_\cA\sum_{y\in I(x)} (A_y\otimes I_\cB)\ket{\psi_i} \bra{\psi_i}(A_y^*\otimes I_\cB) \right) 
\end{equation}
where the last equality follows from Corollary~\ref{Cor:cyclicity} and Equation~(\ref{E:SumToIdentity}).

Notice that for every $y \in I(x)$ we have
\begin{equation}
\sum_i \lambda_i (A_y\otimes I_\cB)\ket{\psi_i} \bra{\psi_i}(A_y^*\otimes I_\cB) = (A_y\otimes I_\cB)\dfrac{\rho_x}{\tTr\rho_x}(A_y^*\otimes I_\cB) =\rho_y.
\end{equation} 
Next define $p_{i,y} =\bra{\psi_i}A_y^* A_y\otimes I_\cB\ket{\psi_i}$ and $\psi_{iy}$ by 
$\ket{\psi_{iy}} =\dfrac{ A_y \otimes I_\cB \ket{\psi_i}}{\sqrt{p_{iy} }}$
for each $i$ and each $y\in I(x)$ so that 
\begin{equation}
\sum_i \lambda_i \dfrac{p_{iy}}{\tTr\rho_y}\ket{\psi_{iy}}\bra{\psi_{iy}} = \frac{\rho_y}{\tTr\rho_y}.
\end{equation} 
 Now taking the trace of both sides of the above equation we get $\sum_i \dfrac{p_{iy}\lambda_i}{\tTr\rho_y} =1$. This implies that $\left\{ \dfrac{\lambda_i p_{iy}}{\tTr\rho_y}, \ket{\psi_{iy}} \right\}_{i\in I}$ is a pure state ensemble 
(\emph{not necessarily optimal}) of $\dfrac{\rho_y}{\tTr\rho_y}$ for each $y\in I(x)$. Thus
\begin{align}
\sum_i \lambda_i f\left( \tTr_\cA\sum_{y\in I(x)} (A_y\otimes I_\cB)\ket{\psi_i} \bra{\psi_i}(A_y^*\otimes I_\cB) \right) 
& = \sum_i \lambda_i f\left( \sum_y p_{iy} \tTr_\cA \ket{\psi_{iy}}\bra{\psi_{iy}} \right) \nonumber\\
& \geq \sum_{iy}\lambda_i p_{iy} f\left(\tTr_\cA\ket{\psi_{iy}}\bra{\psi_{iy}} \right) 
= \sum_{iy} \lambda_i p_{iy} \mu\left( \psi_{iy} \right)
\end{align}
Now since $\left\{ \dfrac{\lambda_i p_{iy}}{\tTr\rho_y}, \ket{\psi_{iy}} \right\}_{i\in I}$ is not necessarily an optimal decomposition of $\dfrac{\rho_y}{\tTr\rho_y}$, we arrive at the inequality 
\begin{equation}
\sum_i \dfrac{\lambda_i p_{iy}}{\tTr\rho_y} \mu\left( \psi_{iy} \right) \geq \mu\left(\dfrac{\rho_y}{\tTr\rho_y} \right)
\end{equation}
for each $y\in I(x)$.
It then follows that 
\begin{equation}
\sum_{iy} \lambda_i p_{iy} \mu\left( \psi_{iy} \right) = \sum_{iy} \lambda_i p_{iy}\dfrac{\tTr\rho_y}{\tTr\rho_y} \mu\left(\psi_{iy} \right)
= \sum_{y}\tTr\rho_y \sum_i \lambda_i \dfrac{p_{iy}}{\tTr\rho_y}\mu\left(\psi_{iy}\right)
\geq \sum_{y\in I(x)}\tTr\rho_y \mu\left( \dfrac{\rho_y}{\tTr\rho_y}\right)
\end{equation}
And therefore
\begin{equation}
\mu \left(\frac{\rho_x}{\tTr\rho_x} \right) \geq \sum_{y\in I(x)} \tTr\rho_y \mu \left(\dfrac{\rho_y}{\tTr\rho_y}\right)
\end{equation}
Thus $\mu$ monotonically decreases at each node in $\cT$. Next to show that $\mu(\rho)\geq \mu(\Lambda(\rho))$, we first iterate the argument over all nodes in the tree to obtain
\begin{equation}
\mu \left(\rho_{\emptyset}\right) 
\geq \sum_{y_1\in I(\emptyset)} \tTr\rho_{y_1} \mu\left(\frac{\rho_{y_1}}{\tTr\rho_{y_1}} \right) 
\geq \sum_{y_1\in I(\emptyset)} \tTr\rho_{y_1}\sum_{y_2 \in I(y_1)}\tTr\rho_{y_2}\mu \left( \dfrac{\rho_{y_2}}{\tTr\rho_{y_2}}\right) 
\geq \cdots 
\geq \sum_{y_1 \in I(\emptyset)}\tTr\rho_{y_1} \cdots \sum_{y_n \in I(y_{n-1})} \tTr \rho_{y_{n}} \mu\left( \dfrac{\rho_{y_n}}{\tTr\rho_{y_n}}\right)
\end{equation}
where each $\rho_{y_j}$ is the unnormalized state at node $y_j\in\cT$ and $y_n$ are the final nodes of $\cT$. Many 
authors \cite{HHHH,LOCC,MMO, VPRK,V98} consider the above inequality the defining quality of an LOCC monotone and 
say that $\mu$ decreases \emph{on average} under local operations and classical communication. But for a more 
functional result, we will go a step further and show that $\mu(\rho_\emptyset)\geq \mu(\Lambda(\rho_\emptyset))$. 
In this last step we will repeatedly use the fact that $\mu$ is convex which is an immediate consequence of the
definition of convex roof extension. This repeated use of the convexity of $\mu$ corresponds to loss of information
to the communicating parties, (Alice and Bob), and can be thought as black boxes in Alice's and Bob's labs
dismissing states produced by the LOCC channel without informing Alice or Bob, (see \cite[pp.~359-360]{V98}).
Continuing the previous argument, we use convexity in the nodes $y_n$ to obtain
\begin{equation}
 \sum_{y_1 \in I(\emptyset)}\tTr\rho_{y_1} \cdot \cdot \cdot \sum_{y_n \in I(y_{n-1})} \tTr \rho_{y_{n}} \mu\left( \dfrac{\rho_{y_n}}{\tTr\rho_{y_n}}\right) \geq \sum_{y_1 \in I(\emptyset)}\tTr\rho_{y_1} \cdot \cdot \cdot \sum_{y_{n-1} \in I(y_{n-2})} \tTr \rho_{y_{n-1}} \mu\left( \sum_{y_n \in I(y_{n-1})} \rho_{y_n}\right) 
\end{equation}

\noindent Now can assume without loss of generality that $\rho_{y_n}= \cE_{y_n}\left(\dfrac{\rho_{y_{n-1}}}{\tTr\rho_{y_{n-1}}}\right) $ for each $y_n\in I(y_{n-1} )$. Using this expression for $\rho_{y_n}$, it follows that 
\begin{equation}
\sum_{y_n \in I(y_{n-1})} \rho_{y_n} = \sum_{y_n\in I(y_{n-1})} \cE_{y_n}\left(\dfrac{\rho_{y_{n-1}}}{\tTr\rho_{y_{n-1}}}\right)
\end{equation}
is a \emph{normalized} state, which allows us to use convexity in the index $y_{n-1}$ so that
\begin{equation}
 \sum_{y_{n-1} \in I(y_{n-2})} \tTr \rho_{y_{n-1}} \mu\left( \sum_{y_n \in I(y_{n-1})} \rho_{y_n}\right) \geq \mu \left( \sum_{y_{n-1} \in I(y_{n-2})}\sum_{y_n \in I(y_{n-1})} \cE_{y_n}(\rho_{y_{n-1}}) \right)
\end{equation}
for each $y_{n-1}\in I(y_{n-2}) $. We then iterate the above argument to arrive at the inequality 
\begin{equation}
\mu(\rho_\emptyset)\geq \mu\left( \sum_{y_1 \in I(\emptyset)}\cdots \sum_{y_n\in I(y_{n-1})} \cE_{y_n}\circ\cdot\cdot\cdot\circ\cE_{y_1}(\rho_\emptyset)\right) = \mu \left(\Lambda (\rho_\emptyset) \right)
\end{equation}
\end{proof}
\end{widetext}

\begin{Cor} \label{Cor:e_entanglement_measure}
The entanglement number is an entanglement measure.
\end{Cor}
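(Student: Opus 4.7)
The plan is to verify each of the three defining properties of an entanglement measure for $e$ by combining the three tools already assembled in the paper: Corollary~\ref{Cor:e_OPSE} (existence of OPSE for $e$), Corollary~\ref{Cor:faithful} (faithfulness passes through the convex roof), and Theorem~\ref{Thm:LOCC_monotone} (LOCC monotonicity criterion). The heart of the argument is a clean fit of the entanglement number into the hypothesis of Theorem~\ref{Thm:LOCC_monotone}; the other two properties are essentially immediate.

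For LOCC monotonicity, I would write the pure-state entanglement number as $e(\ket{\psi}) = f\bigl(\tTr_{\cX_1} \ket{\psi}\bra{\psi}\bigr)$, where
\begin{equation}
 f(\sigma) = \sqrt{1 - \tTr(\sigma^2)}.
\end{equation}
Indeed, if $(\sqrt{\lambda_i})$ are the Schmidt coefficients of $\ket{\psi}$, then the eigenvalues of $\tTr_{\cX_1}\ket{\psi}\bra{\psi}$ are exactly the $\lambda_i$, so $\tTr(\sigma^2) = \sum_i \lambda_i^2$ depends only on the non-zero eigenvalues of $\sigma$, and $f(\sigma) = \sqrt{1 - \sum_i \lambda_i^2}$ recovers the defining formula. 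To invoke Theorem~\ref{Thm:LOCC_monotone} it then remains only to check that $f$ is concave on the density operators of any finite dimensional Hilbert space. I would argue this in two simple steps: first, $\sigma \mapsto \tTr(\sigma^2)$ is convex in $\sigma$, since its restriction to any line $\sigma + tB$ has constant non-negative second derivative $2\tTr(B^2)$; hence $\sigma \mapsto 1 - \tTr(\sigma^2)$ is concave and non-negative on $\tD(\cH)$. Second, composition with the non-decreasing concave function $x \mapsto \sqrt{x}$ preserves concavity, giving concavity of $f$. Together with Corollary~\ref{Cor:e_OPSE} guaranteeing OPSE, this places $e$ within the scope of Theorem~\ref{Thm:LOCC_monotone}, yielding property~$2$.

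For faithfulness (property~$1$), I would observe that on pure states $e(\ket{\psi})$ vanishes precisely when $\sum_i \lambda_i^2 = 1$, which, given $\sum_i \lambda_i = 1$ and $\lambda_i \geq 0$, forces exactly one Schmidt coefficient to equal $1$; equivalently, $\ket{\psi}$ is factorable. Since $e$ is norm-continuous on pure states (as already noted in the proof of Corollary~\ref{Cor:e_OPSE}), Corollary~\ref{Cor:faithful} applies and lifts faithfulness from pure to mixed states. For invariance under local unitaries (property~$3$), on pure states the Schmidt coefficients are preserved by $U_1 \otimes U_2$, so $e$ is invariant; for a mixed $\rho$, if $\{(\lambda_i),(\psi_i)\}$ is an OPSE of $\rho$, then $\{(\lambda_i),((U_1\otimes U_2)\psi_i)\}$ is a convex decomposition of $(U_1\otimes U_2)\rho(U_1^*\otimes U_2^*)$ with the same weighted sum, and symmetrically in the other direction, giving the needed equality of infima.

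The only step requiring any work is the concavity of $f$, and even that reduces to the convexity of $\sigma \mapsto \tTr(\sigma^2)$ plus monotone composition with the square root; there is no genuine obstacle, and the corollary follows by assembling the three properties.
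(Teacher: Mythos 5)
Your proof is correct and follows essentially the same route as the paper: the same auxiliary function $f(\sigma)=\sqrt{1-\tTr(\sigma^2)}=\sqrt{1-\|\sigma\|_2^2}$, the same appeal to Theorem~\ref{Thm:LOCC_monotone} together with Corollary~\ref{Cor:e_OPSE} for monotonicity, Corollary~\ref{Cor:faithful} for faithfulness, and invariance of Schmidt coefficients for local unitaries. The only differences are cosmetic: you prove pure-state faithfulness directly from $\sum_i\lambda_i^2=1$ rather than citing Gudder, and you spell out the concavity of $f$ and the mixed-state unitary invariance in slightly more detail than the paper does.
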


\begin{proof}
Instead of using the definitions of the entanglement number given in
Equations~(\ref{E:entanglement_number_for_pure_states_I}) and (\ref{E:entanglement_number_for_pure_states_II}),
we will use an alternative definition (\cite{G19, G20}) so that we can
invoke our proof for establishing Theorem~\ref{Thm:p_entanglement_measure} in the next section.

First define 
a function $f$ on density operators of finite dimensional Hilbert spaces by 
\begin{equation} \label{E:f}
 f(\rho)= \sqrt{1-\| \rho \|_2^2},
\end{equation}
where $\| \cdot \|_2$ denotes the Hilbert--Schmidt norm.
Then if $\cA \otimes \cB$ is a bipartite Hilbert space, and $\ket{\psi}$ is a pure state of $\cA \otimes \cB$, we can define the entanglement number by 
\begin{equation} \label{E:alternative_def_of_entangl_number}
 e(\psi)= f(\tTr_\cA \ket\psi\bra\psi)= f(\tTr_\cB \ket\psi\bra\psi) .
\end{equation}
Finally extend the definition of the entanglement number $e$ to all mixed states of $\cA\otimes \cB$ using the 
convex roof construction.

As Gudder proved, the entanglement number is faithful when restricted to pure states \cite{G19}. 
Moreover since the function $f$ is norm-continuous, the entanglement number is norm continuous on pure states by 
Equation~(\ref{E:alternative_def_of_entangl_number}).
Therefore by Corollary~\ref{Cor:faithful} the entanglement number is faithful, i.e. it vanishes only on 
separable states. 

In order to apply Theorem~\ref{Thm:LOCC_monotone} to the entanglement number, (and deduce that the entanglement
number is an LOCC monotone), notice that the function $f$ defined in Equation~(\ref{E:f}) is concave 
since the Hilbert--Schmidt norm is convex, 
the square function is increasing and
convex on the positive numbers, and the square root function is increasing and concave on the positive numbers.
Moreover $f(\rho)$ depends only on the singular numbers (and hence on the non-zero eigenvalues) of the density matrix $\rho$ for every density matrix $\rho$.
Furthermore, the extension of the entanglement number to the mixed states of $\cA \otimes \cB$ via the convex roof
function guarantees the existence of OPSE for all mixed states by Corollary~\ref{Cor:e_OPSE}. Thus by 
Theorem~\ref{Thm:LOCC_monotone}, the entanglement number is an LOCC monotone.

Finally notice that the entanglement number is invariant under local unitary transformations. Indeed for any
pure state $\ket\psi\bra\psi$ of a bipartite Hilbert space $\cA \otimes \cB$, local unitary transformations
will not effect the non-zero eigenvalues of $\tTr_\cA \ket\psi\bra\psi$, or of $\tTr_\cB \ket\psi\bra\psi$,
and therefore will not effect $e(\psi)$ according to 
Equation~(\ref{E:alternative_def_of_entangl_number}). This will remain valid under the convex roof extension of 
the entanglement number to mixed states.
\end{proof}

\section{p-Number of a State and Its Properties} \label{Sec:mu_p} 

Motivated by Equations~(\ref{E:f}) and (\ref{E:alternative_def_of_entangl_number}) we now define a family 
of entanglement measures. \emph{We assume that all Hilbert spaces 
mentioned in this section are finite dimensional}.
Let $\cZ$ be a (finite dimensional) Hilbert space and $1< p < \infty$ then
define $f_p:\tD(\cZ)\rightarrow \R_{\geq 0} $ by
\begin{equation}
f_p(\rho) = \left( 1- ||\rho||_p^p \right)^\frac{1}{p} 
\end{equation}
where $||\cdot||_p$ is the Schatten $p$-norm on $\tL(\cZ)$. 

\begin{Rmk}
For $1<p<\infty$ the function $f_p$ has the following properties:
\begin{enumerate}
\item $f_p$ depends only on the non-zero eigenvalues of its argument;
\item $f_p$ is concave;
\item $f_p$ is norm-continuous.
\end{enumerate}
\end{Rmk}

The proof of the Remark is similar to the corresponding properties of the function $f$ defined in Equation~(\ref{E:f}).

Then we define a function $\mu_p$ on pure states of a bipartite Hilbert space $\cA \otimes \cB$ by
\begin{equation}
 \mu_p(\psi) = f_p(\tTr_\cA \ket\psi \bra\psi) = f_p(\tTr_\cB \ket\psi \bra\psi)
\end{equation}
for all pure states $\ket\psi\in\cA\otimes\cB$, and extending $\mu_p$ by the convex roof construction 
to all mixed states as in Equation~(\ref{E:convexroof}). We call $\mu_p$ \emph{the $p$-number of a state}.
Now notice that for $p=2$, the $p$-number of a state coincides with the entanglement number of the state. 

\begin{Rmk}
For all $p\in(1,\infty)$, every mixed state on a bipartite Hilbert space $\cA \otimes \cB$ admits an OPSE for the convex roof construction
defining $\mu_p$.
\end{Rmk}

Indeed since $f_p$ is norm-continuous we obtain that $\mu_p$~is norm-continuous and the Remark follows from Theorem~\ref{Thm:Convex_Roof}.

\begin{Thm} \label{Thm:p_entanglement_measure}
For all $p\in(1,\infty)$, $\mu_p$ is an entanglement measure.
\end{Thm}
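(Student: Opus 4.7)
The plan is to verify the three defining properties of an entanglement measure by imitating, step for step, the blueprint of Corollary~\ref{Cor:e_entanglement_measure}, with $f_p$ playing the role of $f$. The key observation is that all the structural facts used there (faithfulness on pure states, concavity of $f$, norm-continuity, existence of OPSE) have natural analogues for the Schatten $p$-norm.

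First, for \textbf{faithfulness}, I would apply Corollary~\ref{Cor:faithful}. This requires checking that $\mu_p$ is faithful on pure states. Given a pure state $\ket\psi \in \cA\otimes\cB$ with reduced density $\sigma = \tTr_\cA \ket\psi\bra\psi$ and spectrum $(\lambda_i)$ satisfying $\lambda_i \geq 0$, $\sum_i \lambda_i = 1$, we have $\mu_p(\psi) = (1 - \sum_i \lambda_i^p)^{1/p}$. Since $p>1$, $\lambda_i^p \leq \lambda_i$ with equality iff $\lambda_i \in \{0,1\}$, so $\sum_i \lambda_i^p = 1$ iff exactly one eigenvalue equals $1$, i.e.\ $\sigma$ is rank one. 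That forces $\ket\psi$ to be factorable, so $\mu_p$ is faithful on pure states; Corollary~\ref{Cor:faithful} then promotes this to all of $\tD(\cA\otimes\cB)$.

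Second, for \textbf{LOCC monotonicity}, I would invoke Theorem~\ref{Thm:LOCC_monotone}. The hypotheses demand that $f_p$ depend only on the non-zero eigenvalues of its argument (immediate from $\|\rho\|_p^p = \sum_i \sigma_i(\rho)^p$) and that $f_p$ be concave on density operators. Concavity is the one non-trivial step: the Schatten $p$-norm $\|\cdot\|_p$ is convex as a norm, and $t \mapsto t^p$ is convex and non-decreasing on $[0,\infty)$ for $p \geq 1$, so $\rho \mapsto \|\rho\|_p^p$ is convex; hence $\rho \mapsto 1 - \|\rho\|_p^p$ is concave and non-negative on states. Composing with $t \mapsto t^{1/p}$, which is concave and non-decreasing on $[0,\infty)$ for $p \geq 1$, preserves concavity. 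The OPSE hypothesis needed by Theorem~\ref{Thm:LOCC_monotone} is exactly the content of the Remark preceding the theorem (which itself is a direct application of Theorem~\ref{Thm:Convex_Roof} using norm-continuity of $f_p$). Together these give $\mu_p(\Lambda(\rho)) \leq \mu_p(\rho)$ for every LOCC channel $\Lambda$.

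Third, for \textbf{local unitary invariance}, a local unitary $U_1 \otimes U_2$ acts on $\tTr_\cA \ket\psi\bra\psi$ by conjugation with $U_2$ (after the partial trace), leaving the spectrum unchanged; since $f_p$ depends only on the eigenvalues, $\mu_p$ is invariant under local unitaries on pure states, and this invariance carries through the convex roof extension (the set $\cC\cD(\rho)$ transforms bijectively under $\rho \mapsto (U_1 \otimes U_2)\rho(U_1^* \otimes U_2^*)$).

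The hard part, if any, is the concavity step, since it is the only claim that is not essentially read off from the definition; but it reduces to the standard chain of one-dimensional convexity/concavity facts above. Everything else is a direct reuse of the machinery built in Sections 2 and 4.
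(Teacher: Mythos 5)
Your proposal is correct and follows essentially the same route as the paper: the paper's proof likewise declares that the argument of Corollary~\ref{Cor:e_entanglement_measure} repeats verbatim with $f_p$ in place of $f$ (including the same convexity/concavity composition chain for concavity of $f_p$ and the same spectral argument for local unitary invariance), and then supplies exactly your faithfulness check on pure states via the Schmidt coefficients satisfying $\sum_k \lambda_k^p = 1$ and $\sum_k \lambda_k = 1$.
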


\begin{proof}
The proof of Corollary~\ref{Cor:e_entanglement_measure} repeats verbatim here, with the only addition that 
needs to be made is to verify that $\mu_p$ is faithful when restricted to pure states, (a statement that for 
the entanglement number $e$ was proved by Gudder~\cite{G19}).

For pure states, notice that we can compute $\mu_p$ using only the Schmidt decomposition of a state as follows. 
Let $\ket\psi\in\cA\otimes\cB$ be a pure state with Schmidt decomposition 
$\sum_k \sqrt{\lambda_k}\ket{\alpha_k}_\cA\otimes\ket{\beta_k}_\cB$. Then
\begin{align}
\tTr_\cA \ket\psi \bra\psi 
& = \tTr_\cA\left( \sum_{k,l} \sqrt{\lambda_k \lambda_l}\ket{\alpha_k}\bra{\alpha_l}\otimes \ket{\beta_k}\bra{\beta_l}\right) \nonumber\\
& =\sum_k \lambda_k \ket{\beta_k}\bra{\beta_k}
\end{align}
Notice that a similar result will follow if $\cB$ is traced out instead of $\cA$ because of the orthonormality of the $(\ket{\alpha_k})_k$ as well as the $(\ket{\beta_k})_k$.
Thus
\begin{equation} \label{E:mu_p_via_Schmidt}
\mu_p (\psi) = \left(1 - \sum_k \lambda_k^p \right)^\frac{1}{p}.
\end{equation}

Finally we verify that for a pure state $\psi$ in a bipartite Hilbert space $\cA \otimes \cB$, $\mu_p(\psi)=0$
if and only if $\psi$ is factorable.

$\implies)$ Suppose that $\mu_p(\psi)=0$ and let $\sum_k \sqrt{\lambda_k}\ket{\alpha}_\cA\otimes\ket{\beta_k}_\cB$ be the Schmidt decomposition of $\ket\psi$. Then the Schmidt coefficients satisfy the following system of equations.
\begin{equation}
\sum_k \lambda_k^p =1 \text{ and } \sum_k \lambda_k =1. 
\end{equation}
Since each $\lambda_k\in[0,1]$, it must follow that $\lambda_l=1$ for some $l$ and that $\lambda_j=0$ for all 
$j \neq l$. Thus $\ket\psi=\ket{\alpha_l}\otimes\ket{\beta_l}$ as desired. 

$\impliedby )$ Conversely, suppose that $\ket\psi = \ket u \otimes \ket v $ for some states $u\in\cA$ and $v\in\cB$. Then the length of the Schmidt decomposition is $1$ with Schmidt coefficient $1$. Thus $\mu_p (\psi)=0$.
\end{proof}

We also leave as an exercise to the reader to show that for any pure state $\ket\psi$ in a bipartite Hilbert space,
\begin{equation}\lim_{p\rightarrow 1} \frac{d}{dp} \log\mu_p(\psi) = \ent(\psi) \end{equation}

The p-numbers of a state also obey the following monotonicity property:
\begin{Rmk}
Let $1<p<q<\infty$, then $0<\mu_p(\rho)<\mu_q (\rho)<1$ for all states $\rho\in\tD(\cA\otimes\cB)$.
\end{Rmk}

This can be easily seen first for pure states using Equation~(\ref{E:mu_p_via_Schmidt}) and the fact that 
the sum of the $p$-th powers of the eigenvevalues of any density operator is in the interval $(0,1]$ for any 
$p \in (1,\infty)$, and then observing
that the monotonicity passes to the mixed states via the convex roof extension.

\begin{Rmk}
A measure of entanglement similar to the p-number was proposed by Cirone~\cite{MC}.
Cirone defined his measure $\nu_p$ only for pure bipartite states by 
\mbox{$\nu_p(\psi) =1- \sum_{i=1}^n \lambda_i^p$} where the $\sqrt{\lambda_i}$ are the Schmidt coefficients of 
the pure state $\psi$. Majorization techniques were used to discuss conversion of pure states to pure states
via LOCC operations and LOCC monotonicity was restricted to pure states only. 
It was left as an open problem to extend $\nu_p$ to general density matrices. The solution for states
on bipartite Hilbert spaces is implied by Gudder's \cite{G19} result on the entanglement number for 
the case of $p=2$, and our results in Section~\ref{Sec:mu_p} for all other $p$'s. Namely $\mu_p^p$ is 
the extension of $\nu_p$ to all bipartite states. 
\end{Rmk} 

\section{Concluding Remarks}

We have provided conditions under which the extension of any function from pure states to mixed states
via the convex roof construction will exhibit optimal pure state ensembles (OPSE). 
We applied this result to answer a question of Gudder about the existence of OPSE for the entanglement
number. Moreover we proved that the entanglement number is an entanglement measure. In order to prove that 
the entanglement number is LOCC monotone we used a criterion of Vidal. Furthermore we gave a simpler 
proof of Vidal's criterion and we represented LOCC channels using trees. This representation 
gives an interesting point of view on the LOCC operations. 
Finally we introduced a family of entanglement measures, the \mbox{$p$-numbers} of a bipartite state, 
parametrized by $p \in (1,\infty)$, such that the $2$-number of a state is equal to the 
entanglement number.

We do not know whether the entanglement number and the entanglement measures 
presented in Section~\ref{Sec:mu_p} can be extended to multipartite scenarios (i.e. tensor product of more 
that two Hilbert spaces).

\section*{Acknowledgement}

The work presented here is part of the PhD thesis of Ryan~McGaha which is conducted under 
the supervision of George Androulakis for the applied mathematics doctoral program at the University of 
South Carolina.


\begin{thebibliography}{10}
\expandafter\ifx\csname url\endcsname\relax
  \def\url#1{\texttt{#1}}\fi
\expandafter\ifx\csname urlprefix\endcsname\relax\def\urlprefix{URL }\fi
\expandafter\ifx\csname href\endcsname\relax
  \def\href#1#2{#2} \def\path#1{#1}\fi

\bibitem{HHHH}
R.~Horodecki, P.~Horodecki, M.~Horodecki, K.~Horodecki. Quantum entanglement.
  \emph{Reviews of Modern Physics} 2009; \textbf{81}(2):865--942.
\newblock \href {http://arxiv.org/abs/quant-ph/0702225}
  {\path{arXiv:quant-ph/0702225}}. \href
  {http://doi.org/10.1103/RevModPhys.81.865}
  {\path{doi:10.1103/RevModPhys.81.865}}.

\bibitem{G19}
S.~P. Gudder. A theory of entanglement. \emph{Quanta} 2020; \textbf{9}:7--15.
\newblock \href {http://doi.org/10.12743/quanta.v9i1.115}
  {\path{doi:10.12743/quanta.v9i1.115}}.

\bibitem{G20}
S.~P. Gudder. Quantum entanglement: spooky action at a distance. \emph{Quanta}
  2020; \textbf{9}:1--6.
\newblock \href {http://doi.org/10.12743/quanta.v9i1.113}
  {\path{doi:10.12743/quanta.v9i1.113}}.

\bibitem{PV05}
M.~B. Plenio, S.~Virmani. An introduction to entanglement measures.
  \emph{Quantum Information and Computation} 2007; \textbf{7}(1):1--51.
\newblock \href {http://arxiv.org/abs/quant-ph/0504163}
  {\path{arXiv:quant-ph/0504163}}.

\bibitem{VPRK}
V.~Vedral, M.~B. Plenio, M.~A. Rippin, P.~L. Knight. Quantifying entanglement.
  \emph{Physical Review Letters} 1997; \textbf{78}(12):2275--2279.
\newblock \href {http://arxiv.org/abs/quant-ph/9702027}
  {\path{arXiv:quant-ph/9702027}}. \href
  {http://doi.org/10.1103/PhysRevLett.78.2275}
  {\path{doi:10.1103/PhysRevLett.78.2275}}.

\bibitem{BBPSSW96}
C.~H. Bennett, G.~Brassard, S.~Popescu, B.~Schumacher, J.~A. Smolin, W.~K.
  Wootters. Purification of noisy entanglement and faithful teleportation via
  noisy channels. \emph{Physical Review Letters} 1996; \textbf{76}(5):722--725.
\newblock \href {http://arxiv.org/abs/quant-ph/9511027}
  {\path{arXiv:quant-ph/9511027}}. \href
  {http://doi.org/10.1103/PhysRevLett.76.722}
  {\path{doi:10.1103/PhysRevLett.76.722}}.

\bibitem{BDSW96}
C.~H. Bennett, D.~P. DiVincenzo, J.~A. Smolin, W.~K. Wootters. Mixed-state
  entanglement and quantum error correction. \emph{Physical Review A} 1996;
  \textbf{54}(5):3824--3851.
\newblock \href {http://arxiv.org/abs/quant-ph/9604024}
  {\path{arXiv:quant-ph/9604024}}. \href
  {http://doi.org/10.1103/PhysRevA.54.3824}
  {\path{doi:10.1103/PhysRevA.54.3824}}.

\bibitem{CNT87}
A.~Connes, H.~Narnhofer, W.~Thirring. Dynamical entropy of {$C^\star$} algebras
  and von {N}eumann algebras. \emph{Communications in Mathematical Physics}
  1987; \textbf{112}(4):691--719.
\newblock \href {http://doi.org/10.1007/bf01225381}
  {\path{doi:10.1007/bf01225381}}.

\bibitem{Uhl}
A.~Uhlmann. Optimizing entropy relative to a channel or a subalgebra.
  \emph{Open Systems and Information Dynamics} 1998; \textbf{5}:209--227.
\newblock \href {http://arxiv.org/abs/quant-ph/9701014}
  {\path{arXiv:quant-ph/9701014}}.

\bibitem{LCDJ}
S.~Lee, D.~P. Chi, S.~D. Oh, J.~Kim. Convex-roof extended negativity as an
  entanglement measure for bipartite quantum systems. \emph{Physical Review A}
  2003; \textbf{68}(6):062304.
\newblock \href {http://arxiv.org/abs/quant-ph/0310027}
  {\path{arXiv:quant-ph/0310027}}. \href
  {http://doi.org/10.1103/PhysRevA.68.062304}
  {\path{doi:10.1103/PhysRevA.68.062304}}.

\bibitem{GG}
G.~Gour. Family of concurrence monotones and its applications. \emph{Physical
  Review A} 2005; \textbf{71}(1):012318.
\newblock \href {http://arxiv.org/abs/quant-ph/0410148}
  {\path{arXiv:quant-ph/0410148}}. \href
  {http://doi.org/10.1103/PhysRevA.71.012318}
  {\path{doi:10.1103/PhysRevA.71.012318}}.

\bibitem{BL}
H.~Barnum, N.~Linden. Monotones and invariants for multi-particle quantum
  states. \emph{Journal of Physics A: Mathematical and General} 2001;
  \textbf{34}(35):6787--6805.
\newblock \href {http://arxiv.org/abs/quant-ph/0103155}
  {\path{arXiv:quant-ph/0103155}}. \href
  {http://doi.org/10.1088/0305-4470/34/35/305}
  {\path{doi:10.1088/0305-4470/34/35/305}}.

\bibitem{V98}
G.~Vidal. Entanglement monotones. \emph{Journal of Modern Optics} 2000;
  \textbf{47}(2-3):355--376.
\newblock \href {http://arxiv.org/abs/quant-ph/9807077}
  {\path{arXiv:quant-ph/9807077}}. \href
  {http://doi.org/10.1080/09500340008244048}
  {\path{doi:10.1080/09500340008244048}}.

\bibitem{LOCC}
E.~Chitambar, D.~Leung, L.~Man\v{c}inska, M.~Ozols, A.~Winter. Everything you
  always wanted to know about {LOCC} (but were afraid to ask).
  \emph{Communications in Mathematical Physics} 2014; \textbf{328}(1):303--326.
\newblock \href {http://arxiv.org/abs/1210.4583} {\path{arXiv:1210.4583}}.
  \href {http://doi.org/10.1007/s00220-014-1953-9}
  {\path{doi:10.1007/s00220-014-1953-9}}.

\bibitem{G20b}
S.~Gudder. Two entanglement measures. \emph{Journal of Physics: Conference
  Series} 2020; \textbf{1638}:012012.
\newblock \href {http://doi.org/10.1088/1742-6596/1638/1/012012}
  {\path{doi:10.1088/1742-6596/1638/1/012012}}.

\bibitem{Shirokov10}
M.~E. Shirokov. On properties of the space of quantum states and their
  application to the construction of entanglement monotones. \emph{Izvestiya:
  Mathematics} 2010; \textbf{74}(4):849--882.
\newblock \href {http://arxiv.org/abs/0804.1515} {\path{arXiv:0804.1515}}.
  \href {http://doi.org/10.1070/im2010v074n04abeh002510}
  {\path{doi:10.1070/im2010v074n04abeh002510}}.

\bibitem{G60}
J.~G. Glimm. On a certain class of operator algebras. \emph{Transactions of the
  American Mathematical Society} 1960; \textbf{95}(2):318--340.
\newblock \href {http://doi.org/10.1090/s0002-9947-1960-0112057-5}
  {\path{doi:10.1090/s0002-9947-1960-0112057-5}}.

\bibitem{FAN}
M.~Fannes. A continuity property of the entropy density for spin lattice
  systems. \emph{Communications in Mathematical Physics} 1973;
  \textbf{31}(4):291--294.
\newblock \href {http://doi.org/10.1007/bf01646490}
  {\path{doi:10.1007/bf01646490}}.

\bibitem{VW}
G.~Vidal, R.~F. Werner. Computable measure of entanglement. \emph{Physical
  Review A} 2002; \textbf{65}(3):032314.
\newblock \href {http://arxiv.org/abs/quant-ph/0102117}
  {\path{arXiv:quant-ph/0102117}}. \href
  {http://doi.org/10.1103/PhysRevA.65.032314}
  {\path{doi:10.1103/PhysRevA.65.032314}}.

\bibitem{DE}
D.~Eisenbud. Commutative Algebra with a View Toward Algebraic Geometry.
  Springer, New York, 1995.
\newblock \href {http://doi.org/10.1007/978-1-4612-5350-1}
  {\path{doi:10.1007/978-1-4612-5350-1}}.

\bibitem{OHYA}
M.~Ohya, D.~Petz. Quantum Entropy and Its Use. Texts and Monographs in Physics.
  Springer, Berlin, 1993.

\bibitem{A17}
G.~Aubrun, S.~Szarek. Alice and Bob Meet Banach: The Interface of Asymptotic
  Geometric Analysis and Quantum Information Theory. Mathematical Surveys and
  Monographs, 2017.
\newblock \href {http://doi.org/https://doi.org/10.1090/surv/223}
  {\path{doi:https://doi.org/10.1090/surv/223}}.

\bibitem{MMO}
M.~J. Donald, M.~Horodecki, O.~Rudolph. The uniqueness theorem for entanglement
  measures. \emph{Journal of Mathematical Physics} 2002;
  \textbf{43}(9):4252--4272.
\newblock \href {http://arxiv.org/abs/quant-ph/0105017}
  {\path{arXiv:quant-ph/0105017}}. \href {http://doi.org/10.1063/1.1495917}
  {\path{doi:10.1063/1.1495917}}.

\bibitem{MC}
M.~A. Cirone. Quantifying entanglement with probabilities 2001; \href
  {http://arxiv.org/abs/quant-ph/0110139} {\path{arXiv:quant-ph/0110139}}.

\end{thebibliography}
\end{document}